
\AddToHook{package/before/amsmath}{

}

\documentclass{article}

\usepackage{arxiv}

\usepackage{amsthm}
\usepackage{amsfonts}
\usepackage{amssymb}
\usepackage{amsmath}

\usepackage[utf8]{inputenc} 
\usepackage[T1]{fontenc}    
\usepackage{esvect}
\usepackage{epstopdf}		
\usepackage{hyperref}       
\usepackage{url}            
\usepackage{booktabs}       
\usepackage{nicefrac}       
\usepackage{graphicx}
\graphicspath{{./Figures/}}

\usepackage{doi}

\usepackage{comment}
\usepackage{dsfont}
\usepackage[symbol]{footmisc}

\usepackage{natbib}
\bibpunct[, ]{[}{]}{;}{a}{}{,}

\usepackage{algorithm}
\usepackage{algorithmicx}
\usepackage[noend]{algpseudocode}
\usepackage[labelsep=space]{caption}

\theoremstyle{plain}
\newtheorem{theorem}{Theorem}[section]
\newtheorem{lemma}[theorem]{Lemma}

\theoremstyle{definition}
\newtheorem{definition}[theorem]{Definition}

\theoremstyle{remark}
\newtheorem{remark}{Remark}

\newcommand*\Let[2]{\State #1 $\gets$ #2}
\algrenewcommand\alglinenumber[1]{
    {\sf\footnotesize\addfontfeatures{Colour=888888,Numbers=Monospaced}#1}}
\algrenewcommand\algorithmicrequire{\textbf{input}}

\title{Residual U-net with Self-Attention Network for Multi-Agent Time-Consistent Optimal Trade Execution}

\author{{Andrew S. Na}\thanks{CONTACT Andrew S. Na. Email: andrew.na@uwaterloo.ca}\hspace{1.5mm}\\
	David R. Cheriton School of Computer Science\\
	University of Waterloo\\
	Waterloo, ON\\
	\texttt{andrew.na@uwaterloo.ca} \\
	\And
	{Justin W.L. Wan} \\
	David R. Cheriton School of Computer Science\\
	University of Waterloo\\
	Waterloo, ON\\
	\texttt{justin.wan@uwaterloo.ca}
}


\begin{document}

\maketitle

\begin{abstract}
In this paper, we explore the use of a deep residual U-net with self-attention to solve the the continuous time time-consistent mean variance optimal trade execution problem for multiple agents and assets. Given a finite horizon we formulate the time-consistent mean-variance optimal trade execution problem following the Almgren-Chriss model as a Hamilton-Jacobi-Bellman (HJB) equation. The HJB formulation is known to have a viscosity solution to the unknown value function. We reformulate the HJB to a backward stochastic differential equation (BSDE) to extend the problem to multiple agents and assets. We utilize a residual U-net with self-attention to numerically approximate the value function for multiple agents and assets which can be used to determine the time-consistent optimal control. In this paper, we show that the proposed neural network approach overcomes the limitations of finite difference methods. We validate our results and study parameter sensitivity. With our framework we study how an agent with significant price impact interacts with an agent without any price impact and the optimal strategies used by both types of agents. We also study the performance of multiple sellers and buyers and how they compare to a holding strategy under different economic conditions.
\end{abstract}

\begin{keywords}
Time-consistent Mean Variance; Optimal Trade Execution; Portfolio Management; Deep Neural Networks; Multi Agent Games; Reinforcement Learning, Convolutional Neural Networks
\end{keywords}

\section{Introduction}

The optimal trade execution model was developed as a framework that would allow traders to control the speed of their portfolio liquidation in order to limit the disadvantages of liquidating an asset too quickly or too slowly \cite{almgren-2000}. This is done to limit the fluctuation of prices due to large volume movements. However, the framework for optimal trade execution under the mean variance criteria is time inconsistent due to the variance term which is non-monotonic with respect to wealth \cite{basak-2010}. If the optimal control given at time $t$ is time-inconsistent then it does not mean that the control will remain optimal at a later time $t+\Delta t < T$. Thus, the optimal portfolio execution problem is usually solved for time $t$ and the portfolio is not changed until the portfolio matures at $T$. This strategy is referred to as the \textit{pre-commitment} strategy.  

The pre-commitment solution to the mean-variance problem often reformulates the original mean-variance problem to an equivalent formulation \citep{li-2000,forsyth-2011,bjork-2017}. Pre-commitment strategy works well within stable economic regimes and is shown to outperform time-consistent strategies \citep{forsyth-2020,vigna-2020}. However, the pre-commitment assumes that there is complete information from $t$ to $T$ but in reality the state of the economy might change drastically within a given time period due to exogenous factors. The absence of time consistency also prevents us from using dynamic programming directly.

To overcome the time inconsistency problem of the mean variance criteria, there has been increasing development in continuous time consistent mean variance asset allocation in both reinsurance and investments \citep{guan-2022}. The current approaches typically solve the continuous time optimal portfolio allocation problem \citep{zhou-2000,aivaliotis-2018,aivialotis-2014,wang-2011}. The approach in \citep{van-staden-2018} considers the control of impulse controls in jump processes. 

In discrete time, this problem has been solved using dynamic programming assuming local optimality \citep{basak-2010}. The discrete time solution does not solve for the value function of the Hamilton-Jacobi-Bellman (HJB) equation. The analytical solution to the continuous time consistent mean variance asset allocation problem has been studied for an agent with $d$-assets \citep{zhou-2000} and in the $1$ asset reinsurance and investment setting using $K$ agents \citep{guan-2022}. Note, the optimal asset allocation problem can be viewed as a special case of the optimal trade execution problem when there is no price impact, i.e.  $\kappa_t = 0$ and $\kappa_P = 0$ for agents who can only sell/buy the asset. The numerical solutions to the optimal trade execution problem has been presented for the time-inconsistent strategy using finite difference methods (FDM) \citep{forsyth-2011,wang-2011}. The FDM solution to the time-consistent mean variance asset allocation problem, i.e. control on $\alpha$, and its extension to impulse control has also been solved \citep{van-staden-2018}.

Recent advances in reinforcement learning (RL) uses model free methods to solve the optimal trade execution on limit-order-books and real data \citep{ning-2021,lin-2021,chen-2022}. A RL framework that solves the continuous time mean-variance portfolio optimization problem derives the optimal policy iteration \citep{wang-2020}. However, this only considers the control of the portfolio allocation and not the optimal trade rate. A RL method has been used to approximate the optimal trade execution under mean variance criteria in the discrete time setting \citep{hendricks-2014}. Typically, RL methods use the model free (Q-learning) approach. However, the issue with using Q-learning in the optimal trade execution in discrete time is that the problem is solved locally and does not guarantee the solution given satisfies the HJB equation \citep{basak-2010}. In current frameworks, there is no clear environment where agents can interact with. The agents learn independently based on a general simulated market. In this paper, we propose a multi-agent model which couples each agent $k$ implicitly through the price of the asset. This creates an environment that takes into account the decisions made by each individual agent and allows us to simulate the market dynamics more precisely.

In addition, none of the existing numerical solutions has formally addressed the problem of solving the time-consistent mean variance optimal trade execution problem in continuous time for higher dimensions. The analytical solutions are restricted to the control on the portfolio allocation, $\alpha$, and cannot be extended to the control of the trade rate, $v$, because the trade rate is a non-linear control.  This motivates the need for numerical solutions to the continuous time time-consistent mean variance optimal trade execution problems in higher dimensions.

In this paper, we extend the optimal trade execution formulation to the time-consistent framework and the use of neural networks to solve the high-dimensional optimal trade execution problem for $d$ assets and $K$ agents. The contributions of this paper are summarized as follows:
\begin{enumerate} 
    \item We present a time-consistent optimal trade execution framework for multiple assets and multiple agents. This framework allows us to couple multiple agents through the stochastic stock price process indirectly using price impact, and to create an environment where multiple agents can interact with. 
    \item We formulate the mean variance optimal trade execution problem such that the investor is aware of other investors in the market. We reformulate the mean variance problem into its dual representation and an auxillary HJB equation. To extend this framework into higher dimensions, we reformulate the HJB equation as a backward stochastic differential equation (BSDE). The auxillary HJB equation has been shown to have a viscosity solution \citep{aivaliotis-2018}.
    \item We present a neural network solution for the $d$-dimensional assets and $K$-agent optimal trade execution problem using neural networks, which extends the numerical solution of the time consistent optimal trade execution problem to high dimensions. We use techniques based on neural network pricing of high dimensional options to formulate a deep residual self-attention network that overcomes the need to save weights at each timestep \citep{e-2017,chen-2019,han-2016,hure-2020}.
    \item We show numerical experiments on the interactions of $K$-agent optimal trade execution under different market conditions in a finite horizon setting.
\end{enumerate}


In Section 2, we describe the optimal trade execution problem and extend it to the $d$-dimensional setting. The extension of the optimal trade execution to $K$ agents with agents that are aware of the relative performance of other agents. In Section 3, we formulate at the HJB PDE representation of the $d$-dimensional, $K$ agent optimal trade execution problem, its equivalent BSDE formulation and its approximate discretization. In Section 4, we discretize the BSDE to the least squares problem and introduce the neural network models used to solve them. In Section 5, we present numerical results and conclude in Section 6.

\section{Mean-Variance Optimal Trade Execution}\label{sec:OptTrdExec}

In this section, we present the optimal trade execution problem with $i=1,...,d$ assets and $K$-agents. This is an extension of the classical framework \citep{forsyth-2011}. Suppose each agent $k$, where $k\in[1,...,K]$, can observe a basket of $d$ stocks in the market; with price process $\underline{S}_k(t) = [S_{1,k}(t), ..., S_{d,k}(t)]^\top \in \mathbb{R}^d$. Then the number of shares of the $d$ stocks owned by agent $k$ is given by $\underline{\alpha}_k(t) = [\alpha_{1,k}(t),...,\alpha_{d,k}(t)]$ and the amount invested in a risk free bank account is given by $b_k \in \mathbb{R}$. Let $t \in [0,T]$ be the time up to maturity $T$, then at any time $t$, the market agent $k$ has a wealth $X_k(t)$ given by:
\begin{equation}
    X_k(t) = b_k(t) + \sum\limits_{i=1}^{d}\alpha_{i,k}(t) S_{i,k}(t).
\label{eq:portfolio}
\end{equation}
To handle buying and selling cases symmetrically, we start off with $b_0 \in \mathbb{R}$ in cash, $\alpha_0 > 0$ shares if we are selling and $\alpha_0 < 0$ shares if buying. This means that we are trying to liquidate a long (short) position if we are selling (buying). More precisely
\begin{align}
    b_k(0)&=b_0, \;\; S_{i,k}(0)=s_0, \;\; \alpha_{i,k}(0) = \alpha_0, \label{eq:bc_0}\\
    b_k(T)&=b_T, \;\; \alpha_{i,k}(T) = 0,\label{eq:bc_T}
\end{align}
where $b_T - b_0$ is the cash generated by selling/buying in $[0,T)$. The objective is to maximize the wealth at maturity, $T$, and simultaneously minimize the risk as measured by the variance. Note that due to the non-monotonicity of the variance term, this problem as presented is time inconsistent \citep{basak-2010,tse-2012}.

A natural optimization criteria to consider for optimal trade execution is the mean-variance (MV) strategy of Markowitz. The MV strategy looks for the optimal strategy that maximizes an investors expected gain, and at the same time minimizes their risk \citep{wang-2011}. 

In this section we define the mean-variance criteria for optimal trade execution. Let $\mathbb{E}[X_k(T)|v_k(t)]$ be the conditional expected wealth given the trade rate $v_k(t)$. We denote $\mathbb{E}_{v}[X_k(T)] = \mathbb{E}[X_k(T)|v_k(t)]$. The agent $k$ with trade rate $v_k(t)$ maximizes the expected wealth and minimizes the variance defined as
\begin{equation*}
    Var_{v}[X_k(T)] = \mathbb{E}_{v}[X_k(T)^2] - (\mathbb{E}_{v}[X_k(T)])^2.
\end{equation*}
This can be reduced to a single cost function given by
\begin{equation}
    J(X_k(T)) = \mathbb{E}_{v}[X_k(T)] - \frac{\gamma_k}{2}Var_{v}(X_k(T)),
\label{eq:cost_mv}
\end{equation}
where $\gamma_k > 0$ is the risk preference of the investor $k$. Then the mean-variance optimal execution problem is to determine the strategy $v(t)$ such that:
\begin{equation}
    \textnormal{arg}\max\limits_{v} J(X_k(T)) = \textnormal{arg}\max\limits_{v}\left\{\mathbb{E}_{v}[X_k(T)] - \frac{\gamma_k}{2}Var_{v}(X_k(T))\right\}.
\label{eq:cost_mv_opt}
\end{equation}
Let $v_{k}^{*}(t)$ be the optimal control that maximizes \eqref{eq:cost_mv_opt}, then the optimal cost function is given by
\[
    J^*(X_k(T)) = \mathbb{E}_{v^*}[X_k(T)] - \frac{\gamma_k}{2}Var_{v^*}(X_k(T)).
\]

In general, the solution for the mean-variance problem is time inconsistent. Many approaches have been explored to overcome this limitation by solving an auxiliary problem which is referred to as the time consistent mean-variance problem \citep{zhou-2000,forsyth-2011,  tse-2012,aivialotis-2014,van-staden-2018,forsyth-2020,guan-2022}.

\subsection{Problem Formulation}

The dynamics of the optimal trade execution problem can be formulated by the Almgren-Chriss model \citep{almgren-2000}. The instantaneous trade rate is given by:
\[
    v(t) = \frac{d\alpha(t)}{dt},
\]
where $\alpha(t)$ is the amount of stock held by a liquidating agent. Let $\mu$ be the drift, $r$ be the risk free rate and $\sigma$ be the volatility of the asset. The price process of the risky asset follows the Geometric Brownian motion (GBM) \citep{forsyth-2011}:
\[
    dS(t) = (\mu - r + g(v(t))) S(t) dt + \sigma S(t) dW(t),
\]
where $g(v(t))$ is the level of permanent price impact. We use the following linear form for the permanent price impact \citep{gatheral-2010, forsyth-2011}:
\begin{equation*}
    g(v(t)) = \kappa_p v(t),
\end{equation*}
where the constant $\kappa_p$ is a constant permanent price impact factor.

The stock price we receive from trading is given by the price we actually receive from a trade is given by \citep{forsyth-2011}:
\[
    \bar{S}(t,v(t)) = S(t) f(v(t)),
\]
where $f(v(t))$ is the level of temporary price impact. The temporary price impact is used in our model to capture liquidity effects on the trade \citep{almgren-2012}. The function $f(v(t))$ is assumed to have the form
\begin{equation*}
    f(v(t)) = [1+\kappa_{s} \textnormal{sign}(v(t))]\textnormal{exp}\{\kappa_\tau v(t)^\beta\},
\end{equation*}
where $\kappa_{s}$ is the bid-ask spread for the asset, $\kappa_\tau$ is a constant that represents the temporary price impact factor of the investor and $\beta$ is the price impact exponent \citep{forsyth-2011}.

The cash position is then given by \citep{forsyth-2011}:
\[
    \frac{db(t)}{dt} = rb(t) - v(t) S(t) f(v(t)).
\]
Given states $\{S(T^-),B(T^-),\alpha(T^-)\}$ at the instant $t = T^-$ before the end of the trading horizon, we have one final liquidation (if $\alpha(T^-)\neq 0$) to ensure that the number of shares $\alpha(T) = 0$. The liquidation value is computed as follows \citep{tse-2012}:
\begin{align*}
    B(T) &= B(T^-) + \lim_{v\rightarrow-\infty}\{\alpha(T^-)\bar{S}(T^-,v(T^-))\}\\
    &= \int_{t=0}^{T^-}-e^{r(T-t)}v(t)\bar{S}(t,v(t))dt + \lim_{v\rightarrow-\infty}\{\alpha(T^-)\bar{S}(T^-,v(T^-)\}.
\end{align*}

\subsection{Extension of the Almgren-Chriss model to $d$ Assets and $K$ agents}

In this section, we describe how we extend the dynamics of the optimal trade execution problem to $d$ assets and $K$ agents. First, we consider $d$ assets and one agent. Let the trade rate for the agent with $d$ assets be given by $\underline{v}(t) = [v_1(t),...,v_d(t)]^\top$. As with the one dimensional case, we assume the process $S_{i}(t)$ follows the GBM, with a modification due to the permanent price impact. Let $\mu_i$ and $\sigma_i$ be the drift and volatility of stocks $i=1,...,d$. Let $\mathbf{\rho} \in \mathbb{R}^{d\times d}$ be the correlation matrix and we define the correlated random variable $dW_{i}(t) = \sum\limits_{j=1}^{d}L_{i,j}Z_{j} \sqrt{dt}$, where $Z_{j}\sim N(0,1)$ are i.i.d, and $\mathbf{L}$ is the Cholesky factor of $\mathbf{\rho}$, i.e. $\mathbf{\rho}=\mathbf{LL}^\top$. 

Given a vector of initial prices, $s_{i}(0)\in\mathbb{R}$, for each asset $i$, we have:

\begin{equation}
    dS_{i}(t) = (\mu_i - r + g_i(v_i(t))) S_{i}(t) dt + \sigma_i S_{i}(t) dW_{i}(t),
    \label{eq:SDE}
\end{equation}
with initial condition $S_{i}(0) = s_{i}(0)$.

For simplicity, we assume that the market has sufficient liquidity and we do not consider market frictions. A multi-asset optimal trade execution problem has been modelled by a Orstein-Uhlenbeck (OU) process \citep{bergault-2022}. In our model, we assume a constant $\mu_i$, $r$ and $\sigma_i$ and we introduce a permanent price impact function $g_i(v_i(t))$ in (\ref{eq:SDE}). We use the following linear form for the permanent price impact \citep{forsyth-2011}:
\begin{equation*}
    g_i(v_i(t)) = \kappa_{p,i} v_i(t),
\end{equation*}
where the constant $\kappa_{p,i}$ is the permanent price impact factor for asset $i$. A linear permanent price impact function is used to eliminate round-trip arbitrage opportunities as shown in \citep{tse-2012}. Given a vector of stock prices $\underline{s}=[s_{1},...,s_{d}]^\top$, the bank account $b$ is assumed to follow
\begin{equation}
    \frac{db(t)}{dt} = rb(t) - \sum\limits_{i=1}^{d}v_i(t)S_{i}(t)f_i(v_i(t)),
    \label{eq:bank}
\end{equation}
where $f_i(v_i(t))$ is the temporary price impact of the investor as they sell/buy the asset $i$. The function $f_i(v_i(t))$ is assumed to have the form
\begin{equation*}
    f_i(v_i(t)) = [1+\kappa_{s,i} \textnormal{sign}(v_i(t))]\textnormal{exp}\{\kappa_{\tau,i}v_i(t)^\beta\},
\end{equation*}
where $\kappa_{s,i}$ is the bid-ask spread for each asset, $\kappa_{\tau,i}$ is the temporary price impact factor of the investor for each asset and $\beta$ is the price impact exponent \citep{forsyth-2011}.

Next, we extend the optimal trade execution problem to $K$-agents. Let $k = 1,...,K$, then the trading rate for each agent $k$ is given by $\underline{v}_k = [v_{1,k},...,v_{d,k}]^\top$. Let $S_i(t)$ be the price of asset $i$ that follows the GBM. For multiple agents the dynamics of asset $i$ is given by:
\[
    dS_{i}(t) = (\mu_i - r + \sum_{k=1}^{K}g_{i,k}(v_{i,k}(t))) S_{i}(t) dt + \sigma_i S_{i}(t) dW_{i}(t),
\]
where $g_{i,k}(v_{i,k}) = \kappa_{p,i,k}v_{i,k}$. The permanent price impact over multiple agents impact the price of asset $i$ as they liquidate their position in the asset. This couples the actions of each agent $k$ implicitly through the price dynamics. Note, since our economy or state space is modeled through asset prices. This also means the economy has full information about each agent $k$. This tells us that the asset price is driven by not just the drift of our asset but also by the trade activity of each agent $k$. This is consistent with the observation that in a multiple agent setting the total market impact on an asset $i$ is the sum of permanent price impact of all agents and the temporary price impact \citep{Schoneborn-2008}.

The cash dynamic for each agent $k$ is given by
\[
    \frac{db_{k}(t)}{dt} = rb_{k}(t) - \sum\limits_{i=1}^{d}v_{i,k}(t)S_{i}(t) f_{i,k}(v_{i,k}(t)),
\]
where
\[
    f_{i,k}(v_{i,k}(t)) = [1+\kappa_{s,i} \textnormal{sign}(v_{i,k}(t))]\textnormal{exp}\{\kappa_{\tau,i,k}v_{i,k}(t)^{\beta_k}\}.
\]
Note, the bid-ask spread $\kappa_{s,i}$ is only dependent on the asset $i$. The temporary price impact factors of each agent k only affects the execution price of agent $k$, thus there is no coupling involved in this term with other agents. This gives us a measure of how each agent perceives their own impact on the market.

\subsection{Mean Variance Criteria of Performance Aware Agents}

We can extend our framework to consider agents that are aware about their performance relative to others. We refer to these agents as performance conscious agents of degree $\phi$. More formally, for agent $k$, let the market average wealth be given by \citep{guan-2022}:
\[
    \bar{X}(t) = \frac{1}{K}\sum\limits_{k=1}^{K}X_k(t).
\]
Then for $\phi \in [0,1)$ we let $\hat{X}(t) = X_k(t) - \phi \bar{X}(t)$ and the mean-variance criteria becomes \citep{guan-2022}:
\begin{equation}
    J(\hat{X}(T);v,\phi) = \mathbb{E}_{v}[\hat{X}(T)] - \frac{\gamma_k}{2}Var_{v}[\hat{X}(T)],
\label{eq:cost_mv_agent}
\end{equation}
where $\gamma_k > 0$ is the risk aversion parameter of each agent $k$.  Note that by setting $\phi=0$ in \eqref{eq:cost_mv_agent}, the cost function reduces to \eqref{eq:cost_mv}. The solution to the $K$-agent problem results in an optimal control \citep{guan-2022}.
The Nash equilibrium has been shown in \citep{guan-2022} and is given by the following definition:

\begin{definition}\label{def:optimal_control}
Optimal control of time consistent mean-variance problem for $K$-agents \citep{guan-2022}. Let $v_k\in\mathcal{V}_k$ be the admissible control $\forall k=1,...,K$. A vector $(v_1^*,v_2^*,...,v_K^*)$ is said to be a time consistent optimal strategy, any fixed initial state $(t,x,y)\in[0,T]\times\mathbb{R}\times\mathbb{R}$ and a fixed number $h>0$, a new vector $(v_1^h,v_2^h,...,v_K^h)$ defined by:
\[
    v^h(\tau,x,y)=\begin{cases}
        v_k(\tau,x,y) & \textnormal{for }(\tau,x,y)\in[t,t+h)\times\mathbb{R}\times\mathbb{R} \\
        v_k^*(\tau,x,y)& \textnormal{for }(\tau,x,y)\in[t+h,T]\times\mathbb{R}\times\mathbb{R}
    \end{cases}
\]
satisfies:
\[
    \lim\limits_{h\rightarrow 0}\inf \frac{J_k(t,x,y;v_1^*,v_2^*,...v_k^*,...,v_K^*)-J_k(t,x,y;v_1^*,v_2^*,...v_k^h,...,v_K^*)}{h}\geq 0,
\]
for $k=1, 2, \ldots, K$. In addition, the optimal value function is defined as:
\[
    u_k(t,x,y) = J_k(t,x,y;v_1^*,v_2^*,...,v_K^*).
\]
\end{definition}
\begin{remark}
The time consistent solution satisfies the conditions of a subgame perfect Nash equilibrium \citep{van-staden-2018}. We remark that in this paper, we use the term \textit{optimal} control instead of \textit{equilibrium} control in order to be consistent with other numerical approaches in the literature \citep{van-staden-2018,wang-2011,bjork-2017} .
\end{remark}

\section{Backward Stochastic Differential Equation (BSDE) Representation of the HJB Equation} \label{sec:FBSDE}

In this section, we present the optimal trade execution problem using an auxiliary HJB formulation and its BSDE. The HJB formulation has been shown to have a viscosity solution \citep{aivaliotis-2018}. We also show that when there is no price impact, the formulation reduces to the optimal investment problem. We reformulate the HJB to the BSDE in order to solve the optimal trade execution problem as a stochastic optimization problem which is more suitable for applications of neural networks.

\subsection{HJB Formulation}

We derive the general HJB equation for a performance conscious agent $k$. We first look at the HJB equation for a single agent with a single asset. To simplify the exposition, we drop the dependence on $t$ when it is clear. We define the value function such that:
\[
    u(S,b,\alpha,t;\phi) = \sup_{v}\{J(X(T);v,\phi)\}.
\]
The value function $u$ can be solved by reformulating the problem as an auxiliary problem \citep{wang-2011}. However, the auxiliary problems do not permit a viscosity solution. To overcome this, we can rewrite the variance term such that \citep{aivaliotis-2018}:
\[
    Var(X) = \mathbb{E}[X]^2 - \sup_{\psi\in\mathbb{R}}\{-\psi^2-2\psi\mathbb{E}[X]\}.
\]
Then the original mean-variance criteria becomes:
\begin{equation}
    u(S,b,\alpha,t) = \sup_{\psi\in\mathbb{R}}\{U(S,b,\alpha,t) - \frac{\gamma}{2}\psi^2\},
    \label{eq:dual_mv}
\end{equation}
where $U(S,b,\alpha,t)$ is the value function of the Markovian control problem given by \citep{aivaliotis-2018}:
\begin{equation}
    U(S,b,\alpha,t) = \sup_{v}\{(1-\gamma\psi)X(T) - \frac{\gamma}{2}(X(T))^2\}.
    \label{eq:markov_control}
\end{equation}
Note that $\psi^* = -\mathbb{E}^{v^*}[X(T)]$ solves \eqref{eq:dual_mv}. Let $\mathcal{L}U = (\mu-r)S\frac{\partial U}{\partial S} + \frac{(\sigma S)^2}{2}\frac{\partial^2 U}{\partial S^2}$. Then the HJB equation that solves \eqref{eq:markov_control} is given by:
\begin{equation}
    \frac{\partial U}{\partial t} + \mathcal{L}U + r\frac{\partial U}{\partial b} + \sup_{v\in \mathcal{V}}\{g(v)S\frac{\partial U}{\partial S} + v\frac{\partial U}{\partial \alpha}-vSf(v)\frac{\partial U}{\partial b}\} = 0,
\label{eq:hjb_markov}
\end{equation}
with terminal condition
\[
    U(T) = (1-\gamma\psi)X(T) - \frac{\gamma}{2}(X(T))^2.
\]
We extend this formulation to a performance conscious agent $k$ \citep{guan-2022} by replacing $X(T)$ with $\hat{X}(T) = X(T) - \phi\bar{X}(T)$.  We formally extend \eqref{eq:hjb_markov} to $K$ agents and $d$ assets in the following Lemma:
\begin{lemma}\label{lem:hjb}
Let the generator $\mathcal{L}U_k$ be defined as:
\[
    \mathcal{L}U_k = \sum_{i=1}^{d}(\mu_i-r)S_i\frac{\partial U}{\partial S_i} + \frac{1}{2}\sum_{i=1}^{d}\sum_{j=1}^{d}\rho_{i,j}\sigma_i\sigma_j S_iS_j\frac{\partial^2 U}{\partial S_i^2}.
\]
For each $k$ agent with assets $i=1, \ldots,d$, the extended HJB equation is given by:
\begin{align}
    \frac{\partial U_k}{\partial t} + \mathcal{L}_kU_k + rb_k\frac{\partial U_k}{\partial b_k} &+ \sup_{v\in \mathcal{V}}\{\sum_{i=1}^{d}g(\sum_{k=1}^{K} v_{i,k})S_i\frac{\partial U_k}{\partial S_i}\nonumber \\
    &+ \sum_{i=1}^{d}v_{i,k}\frac{\partial U_k}{\partial \alpha_{i,k}}-v_k\sum_{i=1}^{d}(S_if(v_{i,k}))\frac{\partial U_k}{\partial b_k}\} = 0,
    \label{eq:HJB_extended_agent}
\end{align}
with terminal condition
\[
    U_k(S_{1},..., S_d,b_k,\alpha_{1,k},...,\alpha_{d,k},T) = (1-\gamma\psi)X_k(T) - \frac{\gamma}{2}X_k(T)^2.
\]
\end{lemma}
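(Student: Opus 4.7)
The plan is to extend the single-asset, single-agent derivation that produced \eqref{eq:hjb_markov} to the $d$-asset, $K$-agent setting by a direct application of the dynamic-programming principle and It\^o's formula, adopting the Nash-equilibrium viewpoint from Definition \ref{def:optimal_control}: when writing the PDE satisfied by $U_k$, agent $k$ treats the other agents' controls $v_j^*$ as fixed feedback functions of the state, so from agent $k$'s perspective the problem reduces to a Markovian control problem on the joint state $(S_1,\ldots,S_d,b_k,\alpha_{1,k},\ldots,\alpha_{d,k})$.

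First I would collect the joint dynamics already developed in Section \ref{sec:OptTrdExec}: each $S_i$ has drift $\bigl(\mu_i - r + \sum_{k=1}^{K} g_{i,k}(v_{i,k})\bigr)S_i$ with correlated Brownian noise $\sigma_i S_i\,dW_i$ satisfying $dW_i\,dW_j = \rho_{i,j}\,dt$, the bank account satisfies $db_k/dt = rb_k - \sum_i v_{i,k} S_i f_{i,k}(v_{i,k})$, and $d\alpha_{i,k}= v_{i,k}\,dt$. Applying It\^o's formula to $U_k$ yields a drift consisting of the following groups of terms: (i) $\partial_t U_k$; (ii) the control-independent first- and second-order pieces $\sum_i(\mu_i-r)S_i\,\partial_{S_i}U_k + \tfrac12\sum_{i,j}\rho_{i,j}\sigma_i\sigma_j S_iS_j\,\partial^2_{S_iS_j}U_k$, which together equal $\mathcal{L}_k U_k$; (iii) the bank drift $rb_k\,\partial_{b_k}U_k$; and (iv) the control-dependent pieces $\sum_i g\!\bigl(\sum_k v_{i,k}\bigr)S_i\,\partial_{S_i}U_k + \sum_i v_{i,k}\,\partial_{\alpha_{i,k}}U_k - \sum_i v_{i,k}S_i f_{i,k}(v_{i,k})\,\partial_{b_k}U_k$. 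The Bellman principle then forces the sum of (i)--(iii) plus the supremum of (iv) over $v_k\in\mathcal{V}_k$ to vanish, giving exactly \eqref{eq:HJB_extended_agent}. The terminal condition is inherited directly from the auxiliary Markovian control problem \eqref{eq:markov_control} with $X$ replaced by $X_k$.

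The main obstacle will be making precise the Nash-equilibrium coupling inside the supremum: the quantity $\sum_k v_{i,k}$ appearing in the permanent-impact term depends on every agent's control, but the sup is taken only over $v_k\in\mathcal{V}_k$, so the other components $v_j^*$ must be frozen at their equilibrium feedback values, consistent with Definition \ref{def:optimal_control}. I would also note that the viscosity-solution property relied on in \citep{aivaliotis-2018} transfers to this setting agent-by-agent, since for each $k$ the subproblem has the structure of a standard multi-asset Markovian control problem once the remaining agents' strategies are treated as known coefficients of the state dynamics. A secondary but routine point is verifying that the supremum is attained on $\mathcal{V}_k$, which follows from the coercivity of the temporary-impact term $-v_{i,k}S_i f_{i,k}(v_{i,k})\,\partial_{b_k}U_k$ in $v_{i,k}$ together with the exponential growth of $f_{i,k}$.
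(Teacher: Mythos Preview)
Your proposal is correct and follows essentially the same route as the paper: apply the dynamic-programming principle to obtain $\sup_{v_k}\mathbb{E}[\Delta U_k]=0$, expand $\Delta U_k$ via It\^o's formula using the $S_i$, $b_k$, $\alpha_{i,k}$ dynamics from Section~\ref{sec:OptTrdExec}, compute the expected increments, divide by $\Delta t$, and read off \eqref{eq:HJB_extended_agent} with the terminal condition inherited from \eqref{eq:markov_control}. Your additional remarks on freezing the other agents' controls at their equilibrium feedbacks and on attainment of the supremum via coercivity of the temporary-impact term are not in the paper's proof but are welcome clarifications rather than a different method.
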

\begin{proof}
Let $(S_1,...,S_d,b_k,\alpha_{1,k},...,\alpha_{d,k},t)$ be denoted as $(S_i,b_k,\alpha_{i,k},t)$. Let $\Delta t$ be the timestep size such that $\Delta t \rightarrow 0$, then $\Delta S_i = S_i(t+\Delta t) - S_i(t), \Delta b_k = b_k(t+\Delta t) - b_k(t)$, and $\Delta \alpha_{i,k} = \alpha_{i,k}(t+\Delta t) - \alpha_{i,k}(t)$. Then by the optimality of the control and law of iterated expectation for $i=1,...,d$ and for each $k$ the value function of the auxiliary problem is given by:
\begin{equation}
    U_{k} = \sup_{v\in\mathcal{V}}\mathbb{E}[U_{k}(S_i+\Delta S_i,b_k+\Delta b_k,\alpha_{i,k}+\Delta \alpha_{i,k},t+\Delta t)].
    \label{eq:markov}
\end{equation}
Let $\Delta U_{k} = U_{k}(S_i+\Delta S_i,b_k+\Delta b_k,\alpha_{i,k}+\Delta \alpha_{i,k},t+\Delta t) - U_{k}(S_i,b_k,\alpha_{i,k},t)$, then \eqref{eq:markov} can be written as
\begin{equation}
    \sup_{v_k\in\mathcal{V}}\mathbb{E}[\Delta U_{k}] = 0.
    \label{eq:dU}
\end{equation}
Applying Ito's Lemma on $\Delta U_{k}$, we get:
\[
    \mathbb{E}[\Delta U_k] = \mathbb{E} \left[\frac{\partial U}{\partial t}\Delta t + \sum_{i=1}^{d}\frac{\partial U}{\partial S_i}\Delta S_i + \frac{1}{2}\sum_{i=1}^{d}\frac{\partial^2 U}{\partial S_i^2}(\Delta S_i)^2 + \frac{\partial U}{\partial b_k}\Delta b_k + \sum_{i=1}^{d}\frac{\partial U}{\partial \alpha_{i,k}}\Delta \alpha_{i,k}\right],
\]
where
\begin{align*}
    \mathbb{E}[\Delta S_i] &= (\mu_i-r+\sum_{k=1}^{K}g_{i,k}(v_{i,k}))S_i\Delta t,\\
    \mathbb{E}[\Delta b_k] &= (rb_k -\sum_{i=1}^{d}v_{i,k}S_if_{i,k}(v_{i,k}))\Delta t,\\
    \mathbb{E}[\Delta \alpha_{i,k}] &= v_{i,k}\Delta t.
\end{align*}
The expectations of $\Delta S_i, \Delta b_k$ and $\Delta \alpha_{i,k}$ are straight forward. The expectation of $(\Delta S_i)^2$ requires more thought. Applying Ito's lemma on $(\Delta S_i)^2$, the expected value is given by:
\[
    \mathbb{E}[(\Delta S_i)^2] = \sum_{j=1}^{d}\rho_{i,j}\sigma_i\sigma_j S_iS_j\Delta t.
\]
Let $\mathcal{V}:=[v_{\min},v_{\max}]$ be the space of bounded trade rates $v$, such that there exists an optimal control $v_{i,k}^*$. Expanding \eqref{eq:dU} and evaluating $\mathbb{E}[\Delta S_i], \mathbb{E}[(\Delta S_i)^2], \mathbb{E}[\Delta b_k], \mathbb{E}[\Delta \alpha_{i,k}]$ gives us:
\begin{align*}
\left[\frac{\partial U_k}{\partial t} + \mathcal{L}U_k + rb_k\frac{\partial U_k}{\partial b_k}\right.
&+ \sup_{v\in\mathcal{V}} \left\{\sum_{i=1}^{d}\sum_{k=1}^{K}g_{i,k}( v_{i,k})S_i\frac{\partial U_k}{\partial S_i}\right.\\
&\left.\left. + \sum_{i=1}^{d}v_{i,k}\frac{\partial U_k}{\partial \alpha_{i,k}}-\sum_{i=1}^{d}(v_{i,k}S_if_{i,k}(v_{i,k}))\frac{\partial U_k}{\partial b_k}\right\}\right]\Delta t = 0.
\end{align*}
Diving out the $\Delta t$ and bringing the supremum inside the expectation gives us the extended HJB equation with terminal condition
\[
    U_k(S_1,...,S_d,b_k,\alpha_{1,k},...,\alpha_{d,k},T) = (1-\gamma\psi)X_k(T) - \frac{\gamma}{2}X_k(T)^2.
\]
\end{proof}
\begin{remark}
The $K$ HJB equations of Lemma \ref{lem:hjb} are coupled through the optimal control.
\end{remark}
\begin{remark}\textit{Special Case: $\kappa_\tau=0, \kappa_p=0$}\\
For the special case of $\kappa_\tau=0, \kappa_p=0$, i.e. there is no price impact, the optimal trade execution reduces to the optimal asset allocation problem. The analytical solution to the optimal asset allocation problem for time consistent mean-variance problem is known and has been solved in the linear-quadratic (LQ) sense \citep{zhou-2000,li-2000}. 
\end{remark}

\subsection{BSDE Formulation}

A stochastic approach has an advantage in that the HJB can be extended to include multiple assets and multiple agents. This is computationally infeasible with the PDE approach even with the dimensional reduction techniques \citep{forsyth-2011}. This is because each PDE for variance and wealth requires a three dimensional grid even with one asset. Though dimensional reduction can be used, this is does not apply for the general performance conscious agent due to the averaging term. We drop the explicit dependence on $t$ to keep the notation concise. To solve \eqref{eq:HJB_extended_agent}, we formulate the BSDE for $U_k$ using the following lemma.

\begin{lemma}
If there exists the optimal control $v_{i,k}^*$, the BSDE that satisfies viscosity solution for \eqref{eq:HJB_extended_agent} is given by:
\begin{equation}
    dU_k = \left(\frac{\partial U_k}{\partial t} + \mathcal{L}_kU_k\right) dt + \sum\limits_{i=1}^{d}\sigma_i\frac{\partial U_k}{\partial S_i} dW_i,
    \label{eq:BSDE_U}
\end{equation}
where
\begin{align*}
    \frac{\partial U_k}{\partial t} + \mathcal{L}_kU_k &= -rb_k\frac{\partial U_k}{\partial b_k} + \sum\limits_{i=1}^{d}v_{i,k}^*f_{i,k}(v_{i,k}^*)\frac{\partial U_k}{\partial b_k} \\
    &- \sum\limits_{i=1}^{d} (v_{i,k}^*)\frac{\partial U_k}{\partial \alpha_{i,k}} -\sum\limits_{i=1}^{d}\sum_{k=1}^{K}(g_{i,k}(v_{i,k}^*))S_{i}\frac{\partial U_k}{\partial S_{i}}.
\end{align*}
\end{lemma}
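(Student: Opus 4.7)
The plan is to obtain the BSDE by applying It\^o's lemma to $U_k(S_1,\ldots,S_d,b_k,\alpha_{1,k},\ldots,\alpha_{d,k},t)$, treating the state variables as an It\^o process with the dynamics already specified in Section 2, and then use the HJB equation from Lemma~\ref{lem:hjb} evaluated at the optimal control $v^*$ to rewrite the drift coefficient in the claimed form.

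First I would write down the differentials of the state variables. The asset prices $S_i$ satisfy the SDE in~\eqref{eq:SDE} (with the multi-agent permanent impact $\sum_{k=1}^K g_{i,k}(v_{i,k})$), while the bank account $b_k$ and the share holdings $\alpha_{i,k}$ evolve according to the ODEs $db_k = \bigl(rb_k - \sum_i v_{i,k} S_i f_{i,k}(v_{i,k})\bigr)\,dt$ and $d\alpha_{i,k} = v_{i,k}\,dt$, so they contribute only to the drift. Applying multidimensional It\^o's lemma and evaluating along the optimal control $v^{*}$, I obtain
\begin{align*}
dU_k &= \Big(\tfrac{\partial U_k}{\partial t} + \mathcal{L}_k U_k + rb_k \tfrac{\partial U_k}{\partial b_k} + \sum_{i=1}^{d}\sum_{k=1}^{K} g_{i,k}(v^*_{i,k}) S_i \tfrac{\partial U_k}{\partial S_i} \\
&\qquad {} - \sum_{i=1}^{d} v^*_{i,k} S_i f_{i,k}(v^*_{i,k}) \tfrac{\partial U_k}{\partial b_k} + \sum_{i=1}^{d} v^*_{i,k}\tfrac{\partial U_k}{\partial \alpha_{i,k}}\Big)\,dt + \sum_{i=1}^{d}\sigma_i S_i \tfrac{\partial U_k}{\partial S_i}\,dW_i,
\end{align*}
where the cross variation terms $(\Delta S_i)^2$ have been absorbed into $\mathcal{L}_k U_k$ as computed in the proof of Lemma~\ref{lem:hjb}.

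Next I would invoke the HJB equation~\eqref{eq:HJB_extended_agent}. Since $v^{*}$ attains the supremum in the Hamiltonian, the supremum may be replaced by its value at $v^{*}$, so the entire $dt$-coefficient above is exactly zero by the HJB PDE. Solving this identity for $\tfrac{\partial U_k}{\partial t} + \mathcal{L}_k U_k$ yields
\[
\tfrac{\partial U_k}{\partial t} + \mathcal{L}_k U_k = -rb_k\tfrac{\partial U_k}{\partial b_k} + \sum_{i=1}^{d} v^*_{i,k} f_{i,k}(v^*_{i,k}) S_i\tfrac{\partial U_k}{\partial b_k} \cdot \tfrac{1}{S_i}\cdot S_i - \sum_{i=1}^{d} v^*_{i,k}\tfrac{\partial U_k}{\partial \alpha_{i,k}} - \sum_{i=1}^{d}\sum_{k=1}^{K} g_{i,k}(v^*_{i,k}) S_i \tfrac{\partial U_k}{\partial S_i},
\]
which is precisely the drift expression stated in the lemma (after absorbing the innocuous $S_i$ into the second sum as written). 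Substituting this back into the It\^o expansion gives the BSDE~\eqref{eq:BSDE_U}, with diffusion coefficient $\sigma_i S_i \tfrac{\partial U_k}{\partial S_i}$, where the factor of $S_i$ should be understood as part of the diffusion (matching the SDE for $S_i$); in the notation of the statement the $S_i$ is written inside $\tfrac{\partial U_k}{\partial S_i}$ via the abuse of notation already present in Lemma~\ref{lem:hjb}.

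The main obstacle I anticipate is a conceptual rather than a technical one: since the HJB equation is obtained under a supremum, asserting that the drift of the BSDE is equal to the value of the Hamiltonian at the attained maximizer requires the existence of a measurable optimal control $v^*_{i,k}$, which is exactly the hypothesis of the lemma. Beyond that, the connection with the viscosity solution of~\eqref{eq:HJB_extended_agent} follows from the nonlinear Feynman--Kac correspondence between semilinear PDEs and BSDEs, so it suffices to cite the viscosity-solution result of \citep{aivaliotis-2018} together with the It\^o derivation above. The remaining bookkeeping (handling the correlated Brownian increments, and verifying that the terminal condition of the BSDE matches $U_k(T)$ from Lemma~\ref{lem:hjb}) is routine and follows the same pattern as the proof of Lemma~\ref{lem:hjb}.
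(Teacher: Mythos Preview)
Your derivation of the explicit expression for $\partial_t U_k + \mathcal{L}_k U_k$ by evaluating the HJB equation at the optimizer $v^*$ is correct and is the same identity the paper uses. However, your ``substitute back'' step contains a genuine confusion. When you apply It\^o's formula to $U_k$ along the \emph{full} optimally controlled state $(S,b_k,\alpha_{\cdot,k})$, the resulting $dt$-coefficient is the entire left-hand side of~\eqref{eq:HJB_extended_agent} at $v^*$, which is zero; thus along that process $dU_k$ is a pure martingale increment. Substituting your identity for $\partial_t U_k + \mathcal{L}_k U_k$ back into that It\^o expansion simply reproduces the zero drift---it cannot produce a BSDE whose drift is the nonzero quantity $\partial_t U_k + \mathcal{L}_k U_k$.

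The paper avoids this by not deriving the BSDE via It\^o at all. It writes the HJB at $v^*$ in the abstract semilinear form $\partial_t U_k + \mathcal{L}U_k + \sum_i F(x_{i,k},v^*_{i,k},U_k,\partial_{x_{i,k}}U_k,t)=0$ and then invokes directly the PDE--BSDE correspondence of \citep{Pham-2014}: the associated BSDE is $dU_k = -\sum_i F\,dt + Z\,dW$, and by the PDE itself $-\sum_i F = \partial_t U_k + \mathcal{L}U_k$, after which the right-hand side is read off explicitly from~\eqref{eq:HJB_extended_agent}. Your final-paragraph remark about the nonlinear Feynman--Kac correspondence is precisely this argument and should be the backbone of the proof rather than a supporting comment; the It\^o computation along the controlled state is then unnecessary (or, if you insist on an It\^o derivation, it must be applied to a forward process whose generator is exactly $\mathcal{L}_k$, with the $b_k$, $\alpha_{i,k}$ and permanent-impact contributions pushed into the driver $F$ rather than into the forward dynamics).
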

\begin{proof}
The proof is from the direct application of the BSDE formulation given in \citep{Pham-2014}. Let $x_{i,k} = \{S_i,b_k,\alpha_{i,k}\}$, for each agent $k$. Given the optimal control $v_{i,k}^*$ and a semi-linear PDE given by:
\[
    \frac{\partial U_k}{\partial t} + \mathcal{L}U_k + \sum_{i=1}^{d}F(x_{i,k},v_{i,k}^*,U_k,\frac{\partial U_k}{\partial x_{i,k}},t) = 0,
\]
with terminal condition $U_k(T) = \sum_{i=1}^{d}G(x_{i,k})$. The corresponding BSDE is given by:
\[
    dU_k = -\sum_{i=1}^{d}F(x_{i,k},v_{i,k}^*,U_k,\frac{\partial U_k}{\partial x_{i,k}},t)dt + \sum_{i=1}^{d}\sigma_i \frac{\partial U_{i,k}}{\partial x_{i,k}}dW_i,
\]
from the semi-linear PDE we can express $-F(x_{i,k},v_{i,k}^*,U_k,\frac{\partial U_k}{\partial x_{i,k}},t)$ as:
\[
    -\sum_{i=1}^{d}F(x_{i,k},v_{i,k}^*,U_k,\frac{\partial U_k}{\partial x_{i,k}},t) = \frac{\partial U_k}{\partial t} + \mathcal{L}U_k.
\]
From \eqref{eq:HJB_extended_agent} for each $k$ agents:
\begin{align*}
    -\sum_{i=1}^{d}F(x_{i,k},v_{i,k}^*,U_k,\frac{\partial U_k}{\partial x_{i,k}},t) &= -rb_k\frac{\partial U_k}{\partial b_k} + \sum\limits_{i=1}^{d}v_{i,k}^*f_{i,k}(v_{i,k}^*)\frac{\partial U_k}{\partial b_k} \\
    &- \sum\limits_{i=1}^{d} (v_{i,k}^*)\frac{\partial U_k}{\partial \alpha_{i,k}} -\sum\limits_{i=1}^{d}(g_{i,k}(\sum_{k=1}^{K}v_{i,k}^*))S_{i}\frac{\partial U_k}{\partial S_{i}}.
\end{align*}
\end{proof}
The significance of the system of BSDEs is two fold. One, the system of BSDEs does not require us to compute the Hessian of the extended HJB equation. This a big advantage as it allows us to avoid computing the second derivative on $U$ since computation of the second derivative can be very costly for high dimensions. This reduces the computational complexity of the problem. The second significance of the system of BSDEs is that it allows us to solve \eqref{eq:BSDE_U} in a least squares sense which is a recent technique used in high-dimensional option pricing \citep{han-2016,e-2017,hure-2020,chen-2019,na-2023}. This allows us to avoid using numerical solutions to the PDE and overcome the issue with dimensionality. Note, we can replace $X$ in \eqref{eq:dual_mv} and \eqref{eq:markov_control} with $\hat{X}$ for the case where the agent is performance conscious, i.e. $\phi > 0$. 

\subsection{Discretization of BSDE}

We use the Euler's method to simulate the forward SDE \eqref{eq:SDE} through time. Let $m=1,..,M$ be the indices of simulation paths and $n=0,...,N-1$ be the indices of the discrete time-steps from $0$ to $T$. Let $\Delta t = T/N$ be the timestep size and let $t^n = n\Delta t$. We discretize $dW_i^{t^n}$ as $(\Delta W_i)_m^n = \sum\limits_{j=1}^{d}L_{i,j}(Z_j)_m^n\sqrt{\Delta t}$ and \eqref{eq:SDE} as:
\begin{align}
    (S_i)_m^0 &= (s_i)_m^0, \nonumber\ &&\\
    (S_i)_m^{n+1} &= (1+(\mu_i - r + \sum_{k=1}^{K}g_{i,k}((v_{i,k})_m^n))\Delta t)(S_i)_m^n + \sigma_i(S_i)_m^n(\Delta W_i)_m^n,\ &&
    \label{eq:sdedis}
\end{align}
for $i=1, \ldots, d$. Next we discretize the position in the bank account, $b(t)$, using \eqref{eq:bank}:
\begin{equation}
    (b_k)_{m}^{n+1} = (1+r\Delta t)(b_k)_{m}^{n} -  \sum\limits_{i=1}^{d}(v_{i,k})_{m}^{n}(S_{i})_{m}^{n}f((v_{i,k})_{m}^{n})\Delta t.
    \label{eq:bank_discrete}
\end{equation}
The number of shares in stocks, $\alpha(t)$, is updated using \eqref{eq:trade_flow_agent}. Thus the discretized dynamics of $\alpha(t)$ is given by:
\begin{equation}
    (\alpha_{i,k})_{m}^{n+1} = (\alpha_{i,k})_{m}^{n}+(v_{i,k})_{m}^{n}\Delta t.
    \label{eq:alpha_discrete}
\end{equation}
The total wealth for each agent $k$ at each timestep can be calculated as:
\[
    (X_k)^n_m = (b_k)^n_m + \sum_{i=1}^{d}(\alpha_{i,k})^n_m(S_i)^n_m.
\]
When $\phi\neq 0$, we need to account for the cash account at each timestep $n$ by:
\[
    (\hat{X}_k)^{n}_{m} = (X_k)^n_m - \phi(\bar{X}_k)^n_m.
\]
Let $\hat{\mathcal{V}}$ be the discretization of $\mathcal{V}$ into $C\in\mathbb{Z}^{+}$ discrete points. Equations \eqref{eq:sdedis},  \eqref{eq:bank_discrete}, and \eqref{eq:alpha_discrete} are simulated forward in time over a subset of discretized control values $\hat{\mathcal{V}}\subset \mathcal{V}$.  The discrete set of control points  does not need to be too dense and we restrict it to the discrete set shown in \citep{forsyth-2011}.

To discretize the system of BSDEs, we first let $\{D_t, D_{S}, D_{b}, D_{\alpha}\}$ be the approximations of the partial derivatives with respect to $\{t,S,b,\alpha\}$,  respectively, defined as follows:
\begin{align*}
    D_tU &= \frac{U^{n+1} - U^{n}}{t^{n+1} - t^{n}},\\
    D_SU &= \frac{U^{n}(S +\Delta S) - U^{n}(S)}{\Delta S},\\
    D_bU &= \frac{U^{n}(b +\Delta b) - U^{n}(b)}{\Delta b},\\
    D_\alpha U &= \frac{U^{n}(\alpha +\Delta \alpha) - U^{n}(\alpha)}{\Delta \alpha}
\end{align*}
Then we can discretize \eqref{eq:BSDE_U} as:
\begin{align*}
    (U_k)^{n+1}_m - (U_k)^n_m &= \left(D_t( U_k)^{n}_m + (\mathcal{L}_{k}U_k)^n_m\right) \Delta t + \sum\limits_{i=1}^{d}\sigma_iD_{S_i}(U_{k})^n_m \Delta (W)_m^n \\
    &= (\Delta U_k)^n_m,
\end{align*}
where
\begin{align*}
    D_t (U_k)_{m}^{n} &+ (\mathcal{L}_kU_k)_{m}^{n} = -r((b_k)_{m}^{n} - \phi \bar{b}_{m}^n)D_b(U_k)_{m}^{n} \\
    &+ \sum\limits_{i=1}^{d}(v_{i,k})_m^nf((v_{i,k})_m^n)(S_{i})_m^n D_b (U_k)_m^n \\
    &- \sum\limits_{i=1}^{d} ((v_{i,k})_m^n- \phi (\bar{v}_{i})_m^n)D_\alpha (U_k)_{m}^{n} -\sum\limits_{i=1}^{d}(g(\sum_{k=1}^{K}v_{i,k})_m^n))(S_{i})_m^nD_{S}(U_k)_{m}^{n}.
\end{align*}
The recursive system of equations to solve for $(U_k)_m^n$ at time $n$ from $n+1$ is then given by:
\begin{align*}
    (U_k)_{m}^{n} &= (U_k)_{m}^{n+1} - (\Delta U_k)_{m}^{n}.
\end{align*}
We can approximate $U$ at timestep $n$ by:
\[
    (U_k)_{m}^{n} = (U_k)_{m}^{n+1} - (\Delta U_k)_{m}^{n},
\]
and enforcing time-consistency through each discrete timestep the approximation of $U$ is given by:
\begin{equation}
    \label{eq:BSDE_discrete_U}
    (U_k)_{m}^{n} = \mathbb{E}_{v^{*n+1}}[(U_k)_{m}^{n+1}] - (\Delta U_k)_{m}^{n}.
\end{equation}
Note that the BSDE \eqref{eq:BSDE_discrete_U} is solved backward in time.

\subsubsection{Least squares formulation}

To solve the BSDE \eqref{eq:BSDE_discrete_U}, we approximate $(U_k)^n$ by $(\hat{U}_k)^n$ and reformulate the problem in a least squares formulation. Consider the $n^{th}$ timestep from \eqref{eq:BSDE_discrete_U}, the pointwise residual $(R_U)^n_m$ is given by:
\[
    (R_U)^n_m = (U_k)^{n+1}_m - ((\hat{U}_k)^{n}_m-\left(D_t (\hat{U}_k)^{n}_{m} + (\mathcal{L}\hat{U}_k)^{n}_m\right) \Delta t + \sum\limits_{i=1}^{d}\sigma_iD_{S_i}(\hat{U}_k)^{n}_m \Delta (W)^n_m).
\]
The objective becomes the minimization of the residuals such that $(R_U)^n_m \rightarrow 0$. Since our $K$ BSDE equations are coupled by the optimal control and solved backwards in time, we satisfy the optimality condition which solves the multi-agent game as in Definition \ref{def:optimal_control}.

This gives us the criteria we need to form the following least squares problem:
\begin{equation}
    L_{MSE} = \mathbb{E}[||(R_U)^n||_2^2].
    \label{eq:loss_U}
\end{equation}
Minimizing \eqref{eq:loss_U} gives us a good approximation to $(U_k)^n$ . The optimal control $v^{*n}$ can then be determined by:
\[
    (v_{i,k}^*)^{n} = \textnormal{arg}\max_{v\in\hat{\mathcal{V}}}\mathbb{E}_v[(\hat{U}_{k})^n].
\]
\begin{remark}\textit{Computing gain and risk}\\
    Given the optimal control $v_{i,k}^{*n}$ that maximizes  \eqref{eq:cost_mv_opt}, we can compute the expected gain and risk of the strategy by running a Monte Carlo simulation to compute the terminal wealth. The Monte Carlo follows the algorithm as shown in \citep{tse-2012}.
\end{remark}

\section{Neural network approach to the optimal trade execution problem}

In this section, we propose a neural network approach to solve the system of least squares formulation \eqref{eq:loss_U}. The method follows a general framework that has been used to solve high-dimensional American option pricing problems and has shown to produce good results \citep{han-2016,e-2017,hure-2020,chen-2019,na-2023}. This framework has also been used to solve more general HJB equations as shown in \citep{hure-2020}.

\subsection{Neural network formulation}
We model the approximations $\hat{U}_m^n$ using a deep residual U-net with a self attention layer. U-net is a convolutional neural network architecture with great success in image processing and was first introduced for the application of biomedical image segmentation \citep{ronneberger-2015}. We chose this architecture to learn the relationship between the discretized trade rates and assets. This is because U-net has been shown to use and augment existing features more effectively \citep{ronneberger-2015}. We also explore the use of self-attention layer to increase the efficiency of the residual deep network by embedding temporal relationship between the previous estimate $(U)^{n+1}_m$ and the current states given by \{$(S)^n_m, (b)^n_m, (\alpha)^n_m$, $(v)^n_m$\}. The use of a self-attention layer allows our network to determine which features are the most relevant in predicting  the output \citep{vaswani-2017}. This allows us to avoid using recurrent network structures which have a chance to exhibit vanishing gradient properties due to the activation functions used in their gate structures \citep{vaswani-2017}.

\subsubsection{Self-attention residual network}

The standard attention network was initially used to model predictive sequential language models \citep{vaswani-2017}. They were initially implemented to learn the importance of features in predicting a language sequence. The self-attention architecture was introduced as a method to embed information inputs to create latent feature space for more efficient training \citep{vaswani-2017}. In this paper, we use self-attention layer to overcome the limitation of deep residual network methods. Deep residual networks need to save weights at every timestep \citep{han-2016,e-2017,chen-2019}; however, by adding the self-attention layer allows us to capture the dependencies between features.

The attention mechanism relies on a query $Q$, and key-value pair $\{K,\tilde{V}\} $. We let $A(Q,K,\tilde{V})$ be the score function given by \citep{vaswani-2017}:
\[
    A(Q,K,V) = softmax\left(\frac{QK^\top}{\sqrt{l}}\right)\tilde{V},
\]
where the softmax function is a standard activation function \citep{goodfellow-2016} and $l$ is the length of the sequence. Let $h=1,...,H$ be the number of heads where each head is given by $\mathcal{H}_h = A(Q,K,\tilde{V})$. It was found that constructing multiple attention layers in a bagging approach allowed for better performance than a single attention mechanism \citep{vaswani-2017}. The multi-head attention mechanism is the concatenation of $h$ heads and is defined as \citep{vaswani-2017}:
\[
    \mathcal{M}_H = \mathcal{H}_1 \oplus ... \oplus\mathcal{H}_{H-1}\oplus\mathcal{H}_H,\ h=1,...,H,
\]
where $\oplus$ is used to signify concatenation. It has been shown that using a linear layer to represent $\{Q,K,\tilde{V}\}$ is beneficial to learning \citep{vaswani-2017}. The self-attention network is a special case of the multi-head attention mechanism where the inputs are passed through a linear layer to construct $\{Q,K,\tilde{V}\}$. More specifically, given an input $X$ and weights $W_Q$, $W_K$, $W_{\tilde{V}}$ we can define $Q = W_Q X$, $K = W_K X$, $\tilde{V} = W_{\tilde{V}} X$. 

The self-attention mechanism itself does not have much prediction power \citep{vaswani-2017}. To add prediction power to the self-attention layer, we couple it with a convolution neural network (CNN). The benefit of using a CNN is that it captures the high dimensional structure across the discretized space of trade rates and our input features \citep{goodfellow-2016}. More specifically, we use a U-net architecture coupled with a residual structure. The U-net we use has $4$ convolution layers and $4$ transposed convolution layers with a converging-diverging geometry as shown in Figure \ref{fig:unet} \citep{ronneberger-2015}. The convergent-divergent structure of the U-net allows for the network to capture the content in the data and allows for localization \citep{ronneberger-2015}. It also allows us to limit the effects of features that may not contribute to predictions. 
\begin{figure}
    \centering
    \includegraphics[width=0.95\textwidth]{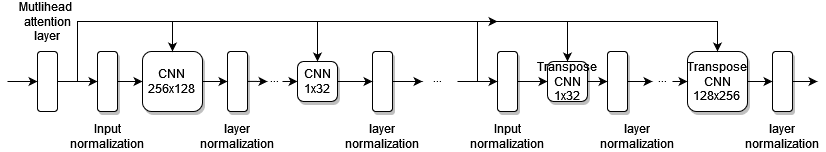}
    \caption{U-net with self attention architecture. It is composed of a multihead attention layer, convolution layers and decoding transpose convolution layers with layer normalization. The input data is added to the output of each layer creating a residual structure.}
    \label{fig:unet}
\end{figure}

In between each network layer, we normalize the input $X$. Normalization allows our network to learn more effectively and prevent cross correlation between network layers \citep{goodfellow-2016}. For learnable parameters $\varphi$ and $\eta$ to assist convergence of the deep residual network, we use group normalization, $G$, given by \citep{he-2016}:
\[
    G(X,\eta,\varphi) = \eta\frac{X-\mathbb{E}[X]}{\sqrt{Var(X) + \varepsilon}} + \varphi,
\]
where $\varepsilon\ll 1$ is some small number to aid in numerical stability. 
\begin{figure}
    \centering
    \includegraphics[width=0.8\textwidth]{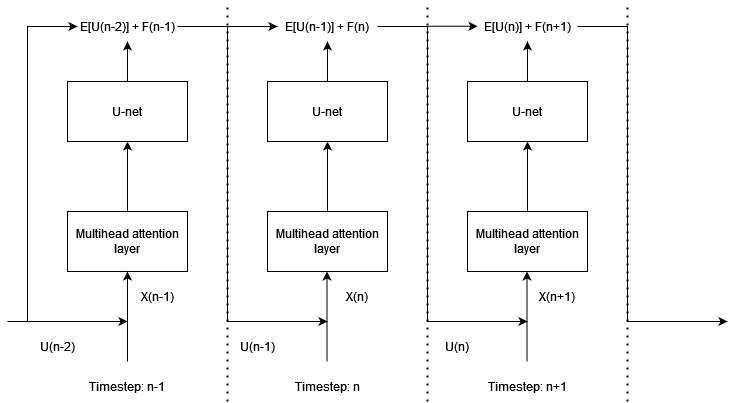}
    \caption{Data flow illustration of the residual U-net with self-attention rolled out through time.}
    \label{fig:selfattresunet}
\end{figure}

For a given input, $X^n$, at timestep $n$, the residual U-net for each timestep has the following form \citep{he-2016}:
\[
    Res_U^n(X^n) = \textnormal{U-net}(X^n) + W_{R} Res_U^{n-1}(X^{n-1}),
\]
where $W_R$ is the weight of a linear layer. The residual U-net with self-attention used in this paper is shown in Figure \ref{fig:selfattresunet}. The network becomes a deep network as we unroll each network through time. By adding $(U_k)^{n+1}$ to the output, it gives our network a residual network structure through time.

To solve \eqref{eq:BSDE_discrete_U}, we first approximate the function $U_k$ using the deep residual network with self attention. Given the set of weights $\Theta$, let $\hat{U}_k(\Theta)$ be the neural network approximation to $U_k$. The initial input $X$ is used to construct the self-attention mechanism $\mathcal{M}_H$, where each head is given by $\mathcal{H}_h = A(W_Q X,W_K X,W_{\tilde{V}} X)$. The output $X_{a}$ is then added to the original input and normalized. More precisely, $X_a$ is given by:
\[
    X_a = G(\mathcal{M}_H + W_a X).
\]
This new input is then passed through our residual U-net. This allows us to learn spatial relationship between our features and the discretized trade rates.

In our network, we use the swish activation function over the ReLU activation function. This is because the ReLU activation function may cause information loss when training over long sequences due to the negative values being zeroed out \citep{na-2023}. The output of CNN plus some weighted $X_a$ gives us the final output:
\[
    F(\Theta) = Res_U^n(X_a;).
\]
For timestep $n$ and sample $m$, the neural network approximation to $(U_k)^n_m$ is given by:
\[
    (\hat{U}_k)^n_m(\Theta) = \mathbb{E}[(U_k)^{n+1};(v_{i,k}^*)^{n+1}] + F(\Theta).
\]
Note that the number of time-steps $N$ represents the depth of the deep residual self-attention network. 

\subsection{Network Training and Prediction}

Let $(\hat{U}_k)^n_m(\Theta_k)$ be the deep self-attention residual network approximation of $(U_k)^n_m$ with parameters $\{\Theta_k\}$ for agent $k$. The parameters $\{\Theta_k\}$ are the weights and biases of the deep self-attention residual network. The input data of $(\hat{U}_k)^n_m(\Theta_k)$ is given by $(X_{input})^n_m = \{(S_i)^n_m,(b_k)^n_m,(\alpha_{i,k})^n_m,(v_{i,k})^n_m, (U_k)^{n+1}_m,n\Delta t,m,c\}$. We concatenate the output from the previous time-step into our input. This information is used by the self-attention mechanism to embed previous timestep information into the model. The last three inputs give positional data for time, simulation path, and control index on the discretized set of control points $\hat{\mathcal{V}}$. The label data used to train the network is given by the value function from the previous timestep and is given by:
\begin{align*}
    (Y_{label})^n_m &= (U_k)^{n+1}_m.
\end{align*}
This is because the BSDE is solved backward in time and we must satisfy the terminal condition \eqref{eq:markov_control}. We let
\[
f^n_m = \left(D_t(\hat{U}_k)^{n}_m + (\mathcal{L}_k\hat{U}_k)^{n}_m\right),
\]
where
\begin{align*}
    D_t(\hat{U}_k)_{m}^{n} &+ (\mathcal{L}_k\hat{U}_k)_{m}^{n} = -r((b_k)_{m}^{n} - \phi \bar{b}_{m}^n)D_b(\hat{U}_k)_{m}^{n} \\
    &+ \sum\limits_{i=1}^{d}(v_{i,k})_m^n f((v_{i,k})_m^n)(S_{i})_m^nD_b (\hat{U}_k)_m^n \\
    &- \sum\limits_{i=1}^{d} ((v_{i,k})_m^n- \phi (\bar{v}_{i})_m^n)D_\alpha (\hat{U}_k)_{m}^{n} -\sum\limits_{i=1}^{d}(g(\sum_{k=1}^{K}v_{i,k})_m^n))(S_{i})_m^nD_S (\hat{U}_k)_{m}^{n},
\end{align*}
and let
\[
    Z^n_m = \sum\limits_{i=1}^{d}\sigma_iD_S(\hat{U}_k)^{n}_m \Delta (W_i)^n_m.
\]
We rewrite the losses \eqref{eq:loss_U} such that 
\begin{align*}
    (L_U)^n(\Theta_k) &= \mathbb{E}[((R_U)^n(\Theta_k))^2] = \mathbb{E}[((Y_{label})^{n} - ((\hat{U}_k)^n(\Theta)-f^n\Delta t+Z^n))^2].
\end{align*}
Minimizing $L_U(\Theta_k)$ gives us the optimal set of parameters for the self-attention network $(\hat{U}_k)^n(\Theta_k)$ given by:
\[
    (\Theta_k^*)^n = \textnormal{arg}\min (L_U)^n(\Theta_k).
\]
We summarize the training of the proposed model in Algorithm \ref{alg:training}. 
The gradients are computed using autodifferentiation.

\begin{algorithm}[htbp]
\begin{minipage}{110mm}
    \caption{Training a deep residual network with self-attention, we denote the neural network as $NN$}
    \label{alg:training}
    \begin{algorithmic}
        \Require{$(X_{input})^n_m,(Y_{label})^n_m$, $(U_k)^N_m$,$NN$}
        \Let{$loss_k$}{[]}
        \For{$k = 1,...,K$}
            \Let{$loss_n$}{[]}
            \For{$n=N-1,...,0$}
                \Let{$(F(\Theta_k)$}{$NN((X_{input})^n_m)$}
                \Let{$(\hat{U}_k)^{n}_m(\Theta_k)$}{$\mathbb{E}[(U_k)^{n+1}_m]+F(\Theta_k)$}\Comment{compute $f^n_m, Z^n_m$}
                \Let{$loss_n$}{$(L_U)_k^n(\Theta_k)$} \Comment{accumulate losses}
            \EndFor
            \Let{$loss_k$}{[]} \Comment{accumulate losses}
        \EndFor
    \Let{$\Theta^*$}{$\arg\min\{loss_k\}$}
    \end{algorithmic}
\end{minipage}
\end{algorithm}

To compute $(v_{i,k}^*)^{n}$ for agent $k$ we run an inference of the network. We perform a linear search for the optimal expected trade rate for each asset $i$ over the grid of $\hat{\mathcal{V}}$. We summarize the inference step of the proposed model in Algorithm \ref{alg:prediction}.

\begin{algorithm}[htbp]
\begin{minipage}{110mm}
    \caption{Inference of value function from a deep residual network with self-attention, we denote the neural network as $NN$.}
    \label{alg:prediction}
    \begin{algorithmic}
        \Require{$X^n_m$, $(U_k)^N_m$,$NN$}
        \For{$k = 1,...,K$}
            \For{$n=N-1,...,0$}
                \Let{$(F(\Theta_k^*)$}{$NN((X)^n_m)$}
                \Let{$(\hat{U}_q)^{n}_m$}{$\mathbb{E}[(U_{k})^{n+1}_m;v_{i,k}^*]+F(\Theta_k^*)$}
                \Let{$(v_{i,k}^*)^n$}{$\arg\min\limits_{v\in\hat{\mathcal{V}}}\{\mathbb{E}_v\{\hat{U}_k^n\}\}$}
            \EndFor
        \EndFor
    \end{algorithmic}
\end{minipage}
\end{algorithm}

Note, due to the non-uniqueness of the optimal control, we train multiple networks for each agent and perform bagging to ensure better results. This allows our neural network to explore multiple paths. Optimization is done using the Adam algorithm \citep{kingma-2014} with a learning rate of $1\times10^{-4}$. 

\subsection{Data Generation}

To construct the training data used by the deep residual self-attention network, we need to perform forward simulations for the dynamics of the underlying asset $(S_i)^n_m$, cash position $(b_k)^n_m$, position in stocks $(\alpha_{i,k})^n_m$, and the control $(v_{i,k})^n_m$. At $n = 0$, we initialize the variables according the initial conditions given by \eqref{eq:bc_0} for the variables $\{(S_i)^0_m,(b_k)^0_m,(\alpha_{i,k})^0_m\}$. The trade rate $(v_{i,k})^n_m$ is constructed as a uniform grid of discretized points over $c = 1,...,C$ points. The trade rate is given by:
\[
    v_{i,k} = \{v_{min}/T,....,v_{max}/T\},
\]
where $v_{i,k}$ is in units of  $1/T$. For $n=1,...,N$, the dynamics of \eqref{eq:sdedis} is simulated using Euler's method. The dynamics of $\{(b_k)^n_m,(\alpha_{i,k})^n_m\}$ follow the dynamics given by \eqref{eq:bank_discrete} and \eqref{eq:alpha_discrete}. To ensure that all assets are liquidated at time $T$, we define the trade rate, $v_T$, at the instant before maturity as \citep{forsyth-2011}:
\[
    v_T = \frac{-(\alpha_{i,k})^N_m}{dT},\ \textnormal{where}\ dT \ll dt.
\]
The cash position is updated one final time at $t = T-dT$ so as to penalize all positions that have not been liquidated at the final trading step. This is done at the end of the forward simulation in order to compute the value function at terminal time $t=T$.

\section{Numerical Results}

In this section, we outline the numerical experiments to validate the proposed method. We demonstrate the effects of different parameters on the proposed method and the numerical studies on different strategies used in optimal trade execution. 

In Experiment $1$, we look at a special case when the trade impact factors $\kappa_p = 0, \kappa_\tau = 0$ to validate the proposed method. In Experiment $2$  we observe at the effects of different parameters on $\alpha^*$. In Experiment $3$, we study the case of $2$ agents where one agent has a large permanent price impact trading in an environment while the other agent has no impact. In this setting, we also vary $\phi=0.0, 0.3, 0.7$ and consider the cases when the economic conditions are good $(\mu-r > 0)$ and bad $(\mu-r < 0)$. In Experiment $4$, we study the neural network solution to the optimal execution problem with $2$ agents on different strategies in different market conditions. We also look at the effects of correlation when agents are trading $d=2$ assets. In Experiment $5$, we study the neural network solution with $10$ agents and $2$ assets. We divide the $10$ agents into buyers and sellers and demonstrate the performance of each group of liquidating agents and compare them to a strategy where they just hold the asset from $[0,T]$.

All experiments are done using a 6GB-NVIDIA GTX 2060 GPU, a 6 core AMD Ryzen 5 3600X processor and 16GB of RAM.

\subsection{Experiment 1: Comparison of Residual Attention Network to Existing solutions}

In this experiment, we compare the approximation of the neural network solution to the results of existing work \citep{forsyth-2011, aivaliotis-2018, guan-2022, zhou-2000}.  This experiment verifies that the outputs of our neural network solution coincides with known solutions wherever they are available. We fix the parameter $\gamma = 6, \phi = 0.0$ and assume no price impact, i.e. $\kappa_p =0, \kappa_{\tau} = 0$ and $\kappa_s = 0$.

\vskip2mm
\noindent\textit{Single agent, single asset}\vskip2mm

In the case of a single agent with one asset, we compare the results to the FDM solution \citep{forsyth-2011, aivialotis-2014}. First, we look at the performance of the proposed method for different $N$. This is done to determine the appropriate number of timesteps we need for an accurate solution to the problem. We fix $M = 3000$, $T=1$ year and run the neural network until $L_U(\Theta) < 5\times 10^{-6}$ and summarize the results in Table \ref{tab:convergence_N}. We also record the FDM solution for $N=10,100,200,300,400$ with relative error given by:
\begin{equation}
    Err_{rel} = \frac{|actual - approx.|}{|actual|} \times 100\%.
    \label{eq:rel_err}
\end{equation}
We treat the FDM solution as the actual solution and the NN solution is the approximation.

\begin{table}[htbp]
    \centering
    \begin{tabular}{c|c|c|c}\hline 
        $N$ & FDM & Proposed method  &Relative error\\ \hline 
        $10$ & $1.0565641$ & $1.051620$ &$0.4679\%$\\ \hline
        $100$ & $1.0565812$ & $1.056400$  &$0.017\%$\\ \hline 
        $200$ & $1.0565820$ & $1.056500$ &$0.0078\%$\\ \hline 
        $300$ & $1.0565823$ & $1.056309$ &$0.026\%$\\ \hline 
        $400$ & $1.0565824$ & $1.056879$ &$0.028\%$\\ \hline
    \end{tabular}
    \caption{Residual U-net with self-attention approximation for the solution to the optimal value $u(x_0,0)$ over different number of time-steps $N$. The FDM values are results from \citep{aivialotis-2014}. The relative errors are computed using \eqref{eq:rel_err} }
    \label{tab:convergence_N}
\end{table}

We see that the proposed method has the smallest relative error to the FDM solution at $N=200$. We found that a network depth $N>200$ tends to over fit the data, which is shown by the decrease in accuracy. 

Next, we show the sample size $M$ that minimizes the relative error. This is done to determine the appropriate batch size we need to accurately compute our results. We fix $N = 300$ and run the neural network until convergence. The results are summarized in Table \ref{tab:convergence_M}. In Table \ref{tab:convergence_M}, we record the FDM solution for the number of space grids $100,1000,2000,3000$. We observe that the proposed method has the smallest relative error to the FDM solution at $M=1000$.

\begin{table}[htbp]
    \centering
    \begin{tabular}{c|c|c|c}\hline 
        $M$ & FDM & Proposed method  &Relative Error\\ \hline 
        $100$ & $1.0553888$ & $1.053110$ &$0.216\%$\\ \hline 
        $1000$ & $1.0565197$ & $1.056694$ &$0.016\%$\\ \hline 
        $2000$ & $1.0565823$ & $1.056762$ &$0.017\%$\\ \hline 
        $3000$ & $1.0566032$ & $1.056879$ &$0.027\%$\\ \hline 
    \end{tabular}
    \caption{Residual U-net with self-attention approximation for the solution to the mean-variance criterion $u(x_0,0)$ over different number of simulations $M$. In the FDM solution $M$ represents the grid size used for the wealth \citep{aivialotis-2014}. The relative error is computed using \eqref{eq:rel_err}}
    \label{tab:convergence_M}
\end{table}

We also make a comparison to the FDM solution to the optimal trade execution for $\kappa_\tau = 2\times10^{-6}$, $T=1/250$, $\mu = 0$ and $r = 0$ \citep{forsyth-2011}. We compute the efficient frontier \citep{tse-2012} as shown in Figure \ref{fig:EF}. For an initial price of $s(0) = 100$ and $\alpha(0) = 1.0$, the expected gain from selling is $99.295$ with a standard deviation of $0.7469$ \citep{forsyth-2011}. Our neural network approximation gives an expected gain of $99.271$ with a standard deviation of $0.7402$, similar to the FDM results.

\begin{figure}[htbp]
    \centering
    \includegraphics[width=0.4\textwidth]{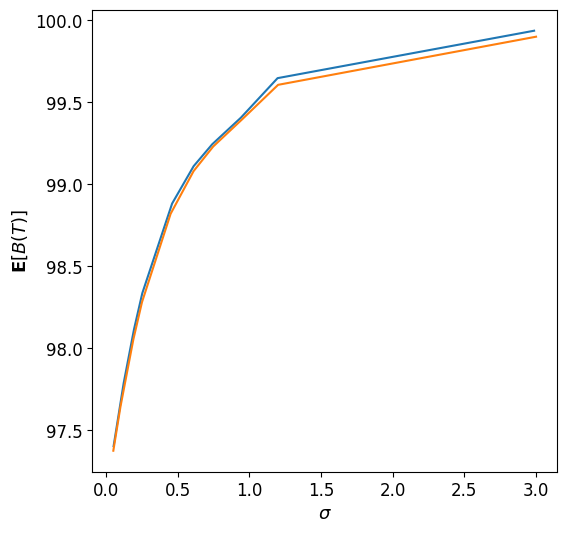}
    \caption{The efficient frontier interpolated from the neural network approximation (blue) and interpolated values from \citep{forsyth-2011} (orange). Interpolation was performed using a polynomial degree $4$.}
    \label{fig:EF}
\end{figure}

\vskip2mm
\noindent\textit{Single agent, $2$ assets}\vskip2mm

We compare the neural network solution with analytical solutions. For the special case of no price impact, an analytical solution exists for the case of a single agent with multiple assets \citep{zhou-2000}. This experiment shows that our outputs are consistent with multidimensional results. 

Let $A=[\mu_1-r,....,\mu_d-r]^\top, \underline{B} = [\sigma_1,...,\sigma_d]^\top$ be the vector of drift rates and the covariance matrix for each assets $i=1,...,d$. Note this framework assumes independence between the assets, and hence we represent the volatilities as a vector. The time-consistent mean variance optimal asset allocation for multiple assets is given by \citep{zhou-2000}:
\[
    \textnormal{Minimize}\ J(X(T); \mu, \lambda) \equiv \mathbb{E}[\mu X(T)^2 - \lambda X(T)],
\]
where parameter $\mu > 0$ and $-\infty < \lambda < \infty$. Let, $\mu = 1$, $\lambda = \gamma$ and let $\alpha^*(t)$ be the quantity of assets held by the agent. Then the optimal asset allocation for a single agent with multiple assets is given by \citep{zhou-2000}:
\begin{equation}
    \alpha^*(t) = [\underline{B}\underline{B}^\top]^{-1}\underline{A}^\top(\frac{\gamma}{2} e^{-r(T-t)}-\underline{X}(t)),
    \label{eq:xyz_solution}
\end{equation}
where $\underline{X}(t) = [X_1(t),...,X_d(t)]^\top.$ Our model can be used to compute $\alpha^*(t)$ by solving the equation $\alpha^*(t) = \alpha^*(t-dt) + v^*(t) dt$.
We  fix $N=100$, and $M = 1000$ and compute $\mathbb{E}[J(X(T));\alpha^*(t),t=0,\Delta t,...,N\Delta t]$ using Monte Carlo simulation \citep{zhou-2000}. Given $\{\alpha_{1}(0)=1.0, \alpha_{2}(0) = 0.5\}$, the optimal value for $d=2$  at $T=1/250,1/52,1/12,1$ years are summarized in Table \ref{tab:convergence_xyz}. We see that the highest relative error occurs at $T = 1$ with a relative error of $0.138\%$. 

\begin{table}[htbp]
    \centering
    \begin{tabular}{c|c|c|c}\hline 
        $T$ & Analytical Solution & Proposed method  &Relative Error\\ \hline 
        $1/250$ & $1.500340$ & $1.500390$ &$0.0033\%$\\ \hline
        $1/52$ & $1.501912$ & $1.501507$  &$0.027\%$\\ \hline 
        $1/12$ & $1.506883$ & $1.506248$ &$0.042\%$\\ \hline 
        $1$ & $1.583452$ & $1.585642$ &$0.138\%$\\ \hline 
    \end{tabular}
    \caption{The residual U-net with self attention approximation to the solution of the optimal value $u(s_0,b_0,\alpha_0,0)$ for different maturities $T$ in years. The analytical solution is given by \eqref{eq:xyz_solution}. We calculate the relative error using \eqref{eq:rel_err}.}
    \label{tab:convergence_xyz}
\end{table}

\vskip2mm
\noindent\textit{$2$ agents, single asset}\vskip2mm

We compare the neural network solution with analytical solutions. For the special case of no price impact, an analytical solution exists for multiple agent with one asset \citep{guan-2022}. 
The mean-variance optimal portfolio problem can be written as \citep{guan-2022}:
\[
    \textnormal{Minimize}\ J(X(T)) \equiv \mathbb{E}[X(T)] - \frac{\gamma}{2}Var(X(T)).
\]
The analytical solution to the time-consistent auxiliary problem is given by \citep{guan-2022}:
\begin{equation}
    \alpha_k^* = \frac{\mu_k}{\gamma_k\sigma_k^2} + \frac{\phi_k}{\sigma_k}\frac{1}{K}\sum_{k=1}^{K}\frac{\mu_k}{\gamma_k\sigma_k(1-\phi_k)}.
    \label{eq:gh_solution}
\end{equation}
We use Monte Carlo Simulation to compute $\mathbb{E}[J(X(T));\alpha^*(t),t=0,\Delta t,...,N\Delta t]$ \citep{guan-2022}. Given $\alpha_{k}(0)=1.0$ the optimal value for $K=2$ are summarized in Table \ref{tab:convergence_gh}. We present results for only one agent as the agents are assumed to be symmetric. We can see from Table \ref{tab:convergence_gh} that the relative errors are generally small with the highest relative error of $0.0351\%$ occurred at $T = 1/12$. 

\begin{table}[htbp]
    \centering
    \begin{tabular}{c|c|c|c}\hline 
        $T$ & Analytical Solution & Proposed method  &Relative Error\\ \hline 
        $1/250$ & $1.000221$ & $1.000216$ &$0.0005\%$\\ \hline
        $1/52$ & $1.001059$ & $1.001032$  &$0.002\%$\\ \hline 
        $1/12$ & $1.004577$ & $1.004224$ &$0.0351\%$\\ \hline 
        $1$ & $1.0565823$ & $1.056694$ &$0.0122\%$\\ \hline 
    \end{tabular}
    \caption{The residual U-net with self-attention approximation to the optimal value, $u(s_0,b_0,\alpha_0,0)$, for different maturities $T$ in years. The analytical solution is given by \eqref{eq:gh_solution}. We calculate the relative error using \eqref{eq:rel_err}.}
    \label{tab:convergence_gh}
\end{table}

To summarize, the proposed model performs well in both multi-agent and multi-asset cases.

\subsection{Experiment 2: Sensitivity of Parameters}

\begin{figure}[htbp]
    \centering
    \includegraphics[width=0.33\textwidth]{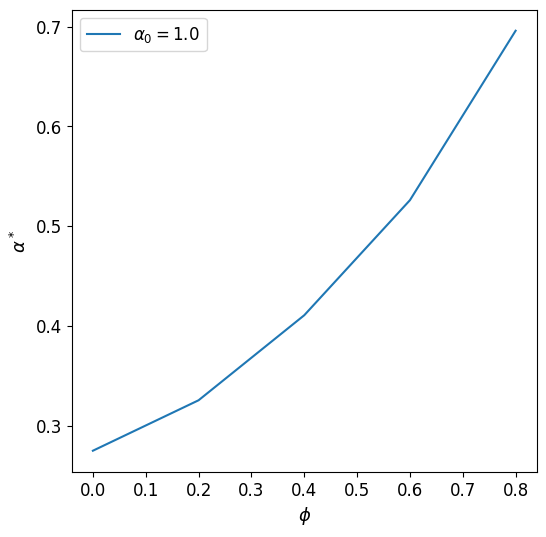}
    \hspace{0.5cm}
    \includegraphics[width=0.33\textwidth]{Figures/P_sensitivity_2K1D_nn_solution.png}
    \caption{The sensitivity of of $\alpha^*$ with respect to $\phi=[0.0, 0.2, 0.4, 0.6, 0.8]$ at $t=1/250$. The sensitivity of agent $1$ (left) and the agent $2$ (right)}
    \label{fig:phi_sensitivity}
\end{figure}

In this experiment, we study the effects of the relative performance factor $\phi$, the risk preference $\gamma$, the permanent price impact factor $\kappa_p$ and the temporary price impact factor $\kappa_t$ on the optimal portfolio $\alpha^*$. We vary each of these factors independently and observe the portfolio at different time-steps for $2$ agents with $1$ asset. For the sensitivity study we assume that agents want to liquidate long positions (sell only). We set $\alpha_0=1.0$, $s_0=1.0$, $b_0=0.0$, $\mu = 0.1$, $r = 0.05$, $\sigma = 0.2$ and $T=1/250$. 

To analyze the sensitivity of $\phi$, we let $\gamma = 6$, $\kappa_p = 0$ and $\kappa_\tau=0$. The results are shown in Figure \ref{fig:phi_sensitivity}. As we can see, as $\phi$ increases, $\alpha^*$ also increases. This is expected since the agent invests more into risky assets as they become more aware of other agents \citep{guan-2022}. Thus our neural network approach is able to capture this behaviour correctly.

\begin{figure}[htbp]
    \centering
    \includegraphics[width=0.33\textwidth]{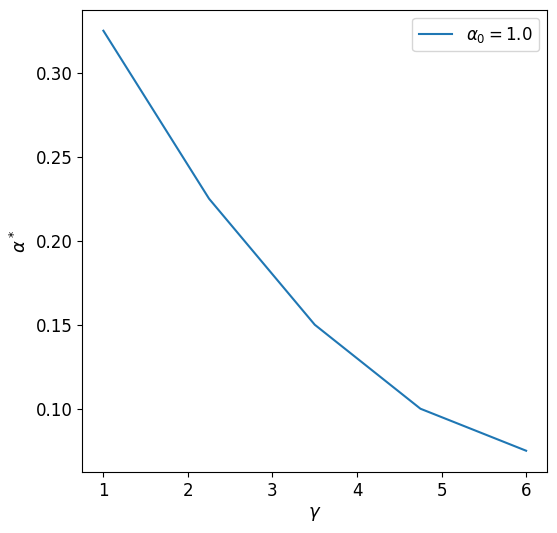}
    \hspace{0.5cm}
    \includegraphics[width=0.33\textwidth]{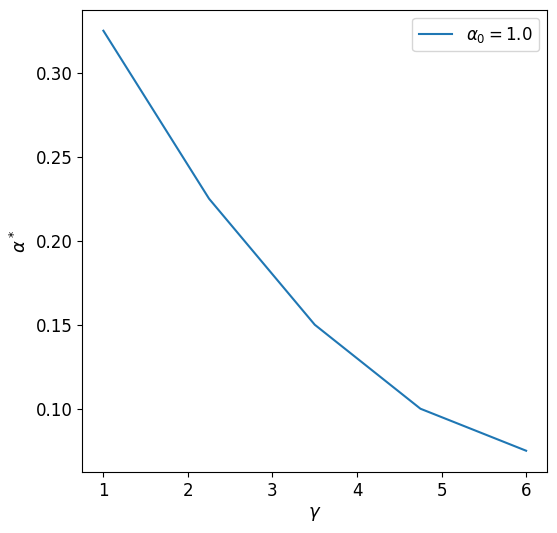}
    \caption{The sensitivity of of $\alpha^*$ with respect to $\gamma=[3.0, 3.75, 4.5, 5.25, 6.0]$ at $t=1/250$. The sensitivity of agent $1$ (left) and the agent $2$ (right).}
    \label{fig:gamma_sensitivity}
\end{figure}

Moreover, we also want to ensure that $\alpha^*$ decreases as $\gamma$ increases. This behaviour is an important behaviour to observe in agents as it is commonly known that as the risk aversion increases, the agent invests less in risky assets \citep{zhou-2000}. We fix $\phi = 0.0$, $\kappa_p = 0.0$, and $\kappa_\tau = 0.0$. 
Figure \ref{fig:gamma_sensitivity} shows that $\alpha^*$ indeed decreases as $\gamma$ increases, illustrating the effectiveness of the neural network approach. 

\begin{figure}[htbp]
    \centering
    \includegraphics[width=0.37\textwidth]{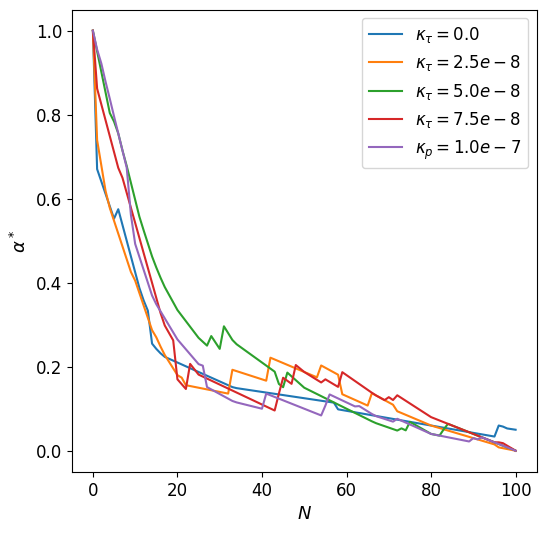}
    \hspace{0.3cm}
    \includegraphics[width=0.37\textwidth]{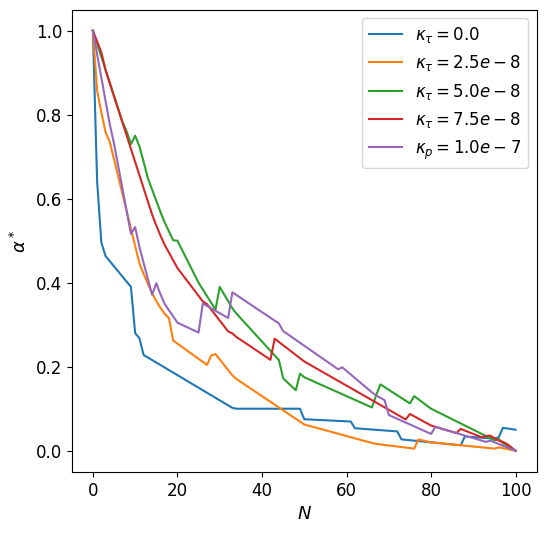}
    \includegraphics[width=0.37\textwidth]{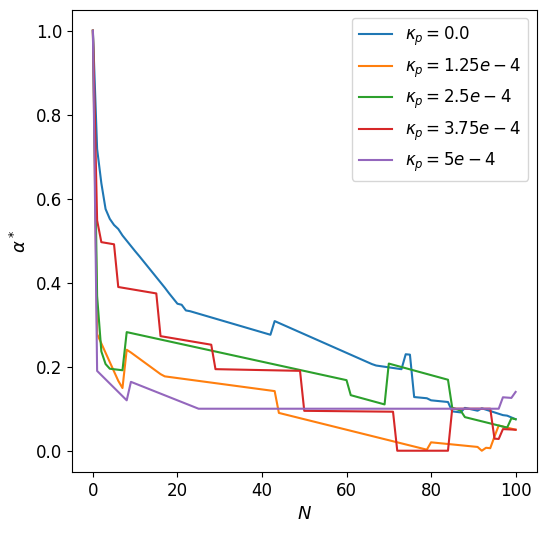}
    \hspace{0.3cm}
    \includegraphics[width=0.37\textwidth]{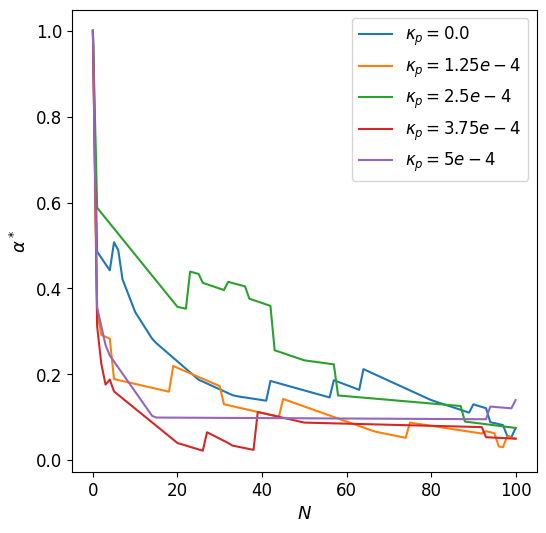}
    \caption{The effect of $\kappa_t =[0.0, 2.5\times10^{-8}, 5\times10^{-8}, 7.5\times10^{-8}, 1\times10^{-7}]$ (top) and $\kappa_p=[0.0, 0.000125, 0.00025, 0.000375, 0.0005]$ (bottom) on $\alpha^*$ for agent $1$ (left) and agent $2$(right). We used the average of $10$ networks for this simulation.}
    \label{fig:sensitivity_history}
\end{figure}

To observe the effects of $\kappa_p$ and $\kappa_t$, we fix $\phi=0.0$ and $\gamma=6$. We choose to fix $\phi=0.0$ to avoid influence of other agents to the dynamics. In Figure \ref{fig:sensitivity_history}, we plot $\alpha^*$ over different time-steps $N$ for different $\kappa_t$ (top) and $\kappa_p$ (bottom) for both agents.

The temporary price impact, $\kappa_\tau$, is the local impact each agent has on the asset price due to the velocity of trade. To maximize the trade criteria, the agents take a less aggressive approach and prefer to liquid their positions in a more gradual manner. 
We observe that when the permanent price impact, $\kappa_p$, is low the agent trades more actively and the activity decreases as $\kappa_p$ increases. This is expected because as the permanent price impact increases, the more impact the agent has on the asset price. Since the goal of the optimal execution problem is to trade the asset without affecting the price as much as possible the agent will be more hesitant to trade as their permanent price impact increases. 

\begin{figure}[htbp]
    \centering
    \includegraphics[width=0.36\textwidth]{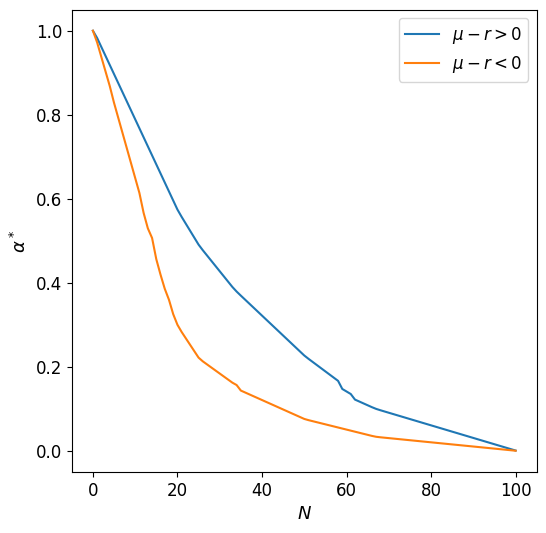}
    \hspace{0.4cm}
    \includegraphics[width=0.36\textwidth]{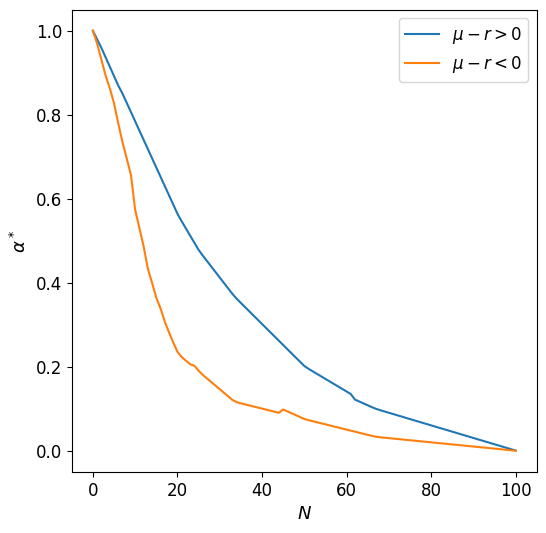}
    \includegraphics[width=0.37\textwidth]{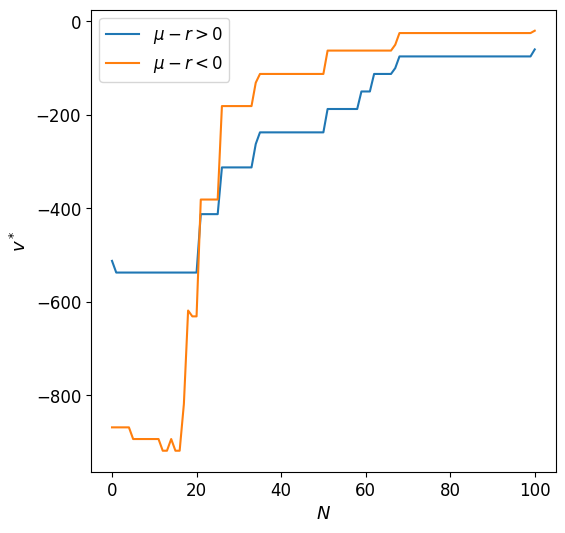}
    \hspace{0.3cm}
    \includegraphics[width=0.37\textwidth]{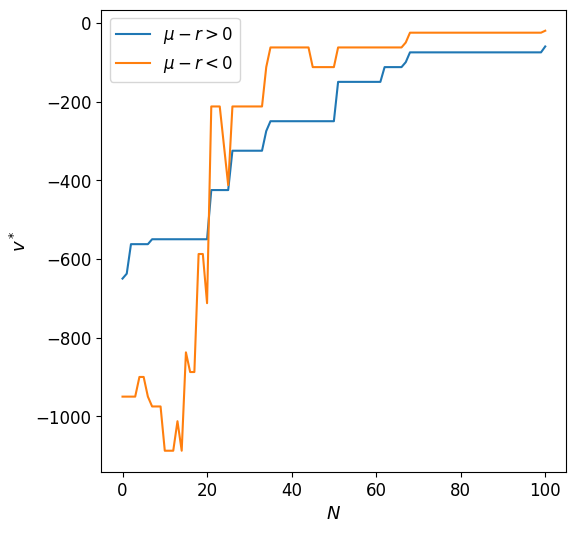}
    \caption{$\alpha^*$ for agent $1$ (top left) and agent $2$ (top right). $v^*$ for agent $1$ (bottom left) and agent $2$ (bottom right) with $\kappa_t=1\times 10^{-7}$. We fix $M=1000$ and $N=100$. We used the average of $100$ networks for this simulation.}
    \label{fig:portfolio_good}
\end{figure}

\subsection{Experiment 3: Optimal Execution in Advantageous and Disadvantageous Market Conditions for $2$ Agents and the Effects of Correlation}

In this experiment, we study the affects of macroeconomic pressure on the asset. We consider $K=2$, two agents that sell off $d=1$ asset and consider two cases. To simulate good market conditions, we set $\mu-r>0$. In contrast, to simulate poor market conditions we set $\mu-r<0$. We assume there is no permanent price impact, i.e. $\kappa_p = 0.0$, $\phi = 0.2$ and $\gamma = 6$. For good market conditions, we set $\mu = 0.1$ and $r = 0.05$. For poor market conditions, we set the $\mu = 0.0$. Figure \ref{fig:portfolio_good} shows the sell only strategy of agent $1$ (left) and agent $2$ (right). The portfolio is presented in the top figures and the controls are presented in the bottom figures.

In Figure \ref{fig:portfolio_good}, we observe that the agents liquidate their position faster if they expect the price of the asset to decrease. This is expected but how they choose to sell is important to observe. We can see the optimal trade speed for $\mu-r>0$ is more gradual. This means the sell off is more gradual and we do not observe large jump in behaviour. This is not the case for $\mu-r<0$ as we see fast sell off in the first $20$ timesteps. Then any residual amount of the asset is sold off more gradually until $T$. Note the difference in optimal paths is due to randomization of the training data during the batch training.

\begin{figure}[htbp]
    \centering
    \includegraphics[width=0.35\columnwidth]{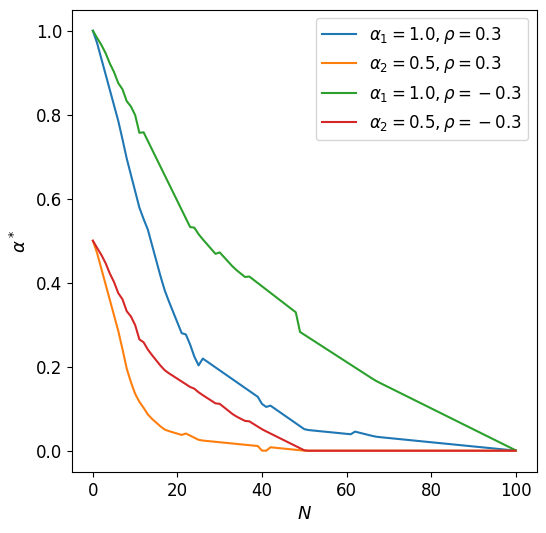}
    \hspace{0.3cm}
    \includegraphics[width=0.35\columnwidth]{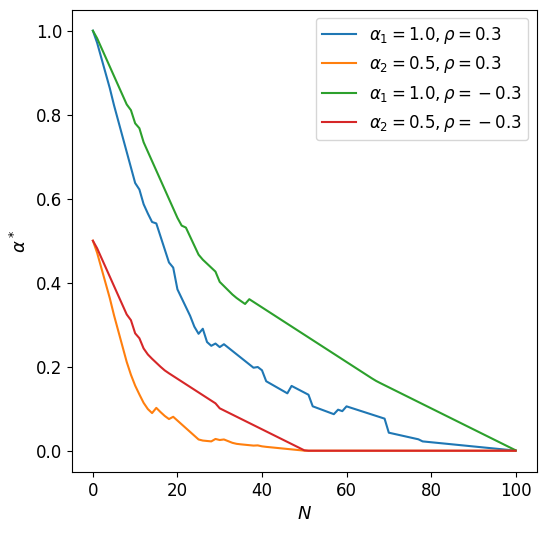}
    \includegraphics[width=0.35\columnwidth]{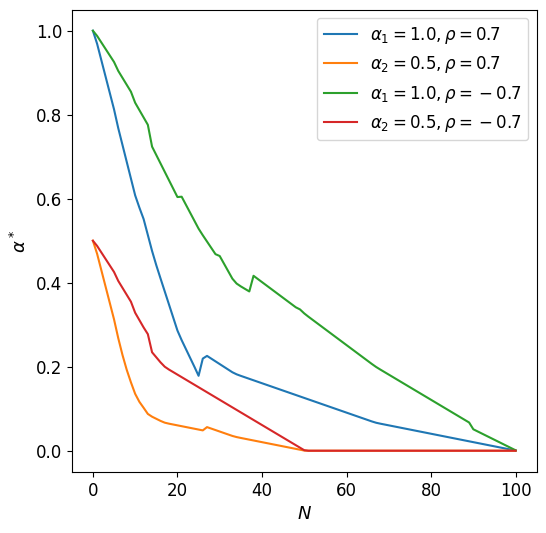}
    \hspace{0.3cm}
    \includegraphics[width=0.35\columnwidth]{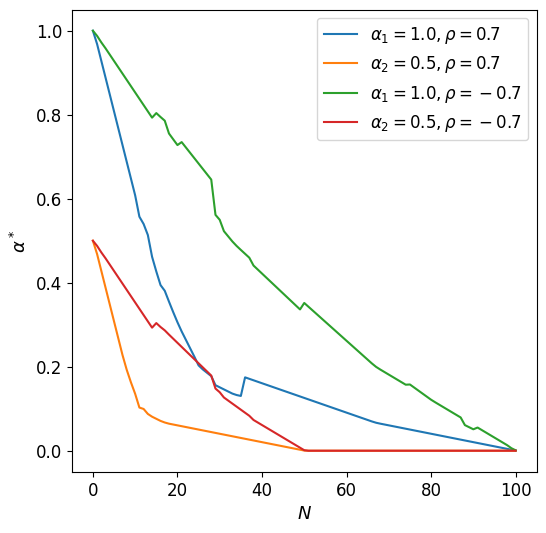}
    \caption{$\alpha^*$ for agent 1 (top left) and agent 2 (top right) with low correlation. $\alpha^*$ for agent 1 (bottom left) and agent 2 (bottom right) with high correlation.We used the average of $100$ networks for this experiment.}
    \label{fig:portfolio_good_2k_2d}
\end{figure}

We extend the experiment to $d=2$ to study the affects of correlation with the proposed method. Since we directly consider the number of shares of an asset held instead of the proportion, we do not need to worry about adding additional constraints to the problem. In experiment $2$, we only look at the selling agents and consider $4$ cases to study the affects of correlation in the order execution problem. In case $1$, we simulate good market conditions $(\mu_1-r>0$, $\mu_2-r>0)$. In case $2$, we simulate poor market conditions $(\mu_1-r<0$, $\mu_2-r<0)$. In case $3$, we simulate mixed market conditions $(\mu_1-r>0$, $\mu_2-r<0)$. We study the affects of correlation between assets $1$ and $2$. We set weak positive correlation as $\rho=0.3$ and strong positive correlation as $\rho=0.7$. We set weak negative correlation as $\rho=-0.3$ and strong negative correlation as $\rho=-0.7$.

\vskip2mm
\noindent\textit{Case $\mathbf{1}$: $\mathbf{\mu_1-r > 0, \mu_1-r>0}$}\vskip2mm

In case $1$, we set $\mu_1=\mu_2=0.1$ and $r=0.05$. Figure \ref{fig:portfolio_good_2k_2d} shows the optimal portfolio for each asset with high and low correlations. We also explore the case of positive and negative correlation. The top row shows the optimal portfolio for agent $1$ (left) and agent $2$ (right) with weak correlation. The bottom left and right figures show the optimal portfolio of the agents for strong correlation. Due to the non-uniqueness of the optimal trade rate, there is no uniqueness in the optimal portfolios.

We observe that when assets are negatively correlated, the agents tend to hold more risky assets than when they are positively correlated. Positively correlated assets sell more quickly because when one asset is sold, the price impact increases the value of correlated assets, prompting the agent to sell off assets faster.  Asset $1$ is held longer to reduce the overall impact when the agent liquidates their position. Figure \ref{fig:control_good_2k_2d} shows the optimal control for each agent in assets with low correlation (top left and right) and high correlation (bottom left and right).

\begin{figure}[htbp]
    \centering
    \includegraphics[width=0.36\columnwidth]{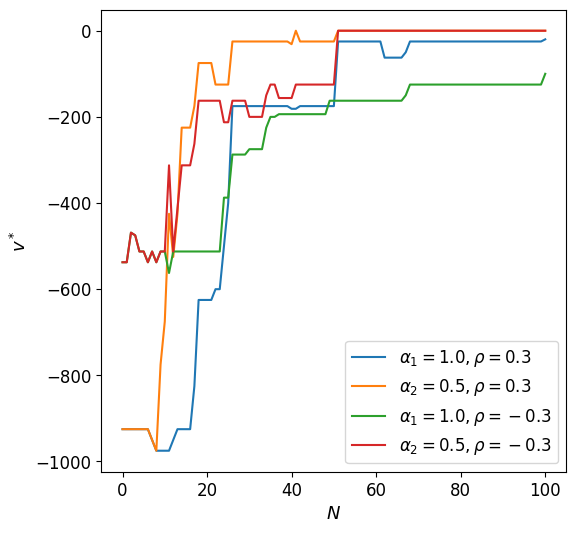}
    \hspace{0.3cm}
    \includegraphics[width=0.36\columnwidth]{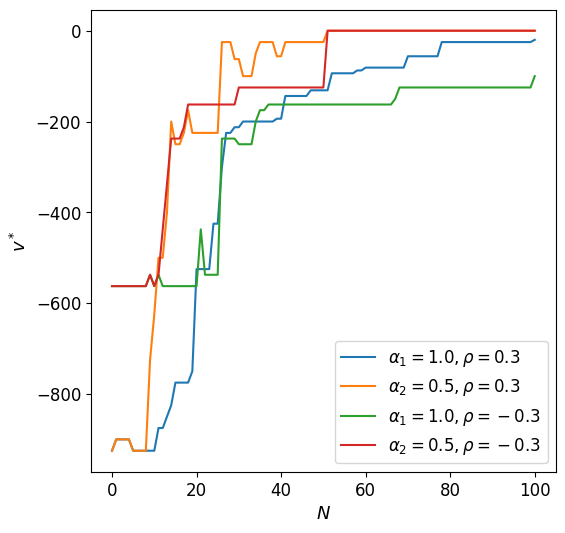}
    \includegraphics[width=0.36\columnwidth]{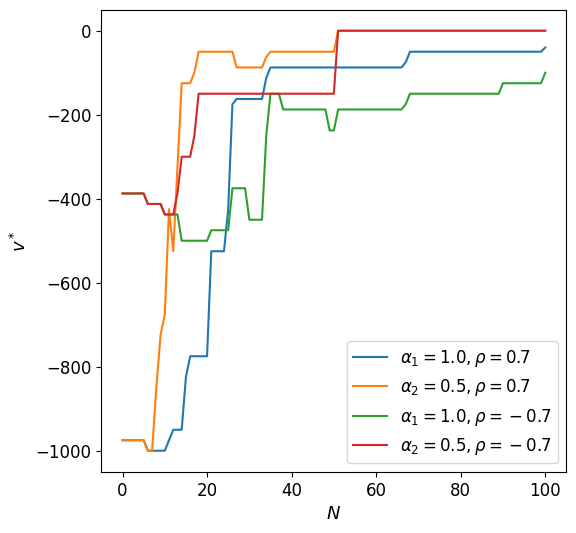}
    \hspace{0.3cm}
    \includegraphics[width=0.36\columnwidth]{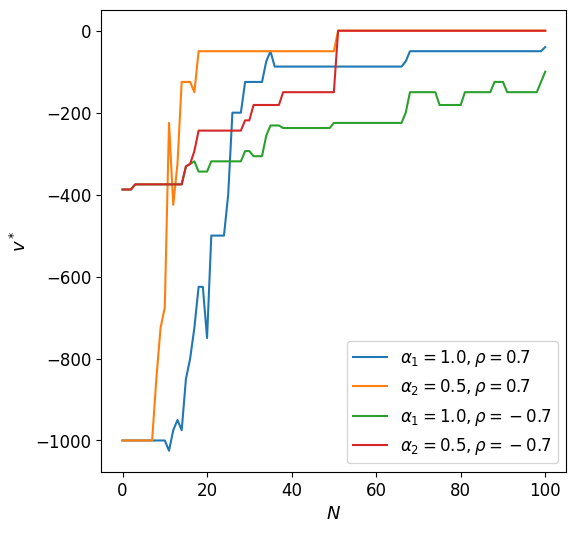}
    \caption{$v^*$ for agent 1 (top left) and agent 2 (top right) with low correlation. $v^*$ for agent 1 (bottom left) and agent 2 (bottom right) with high correlation. We used the average of $100$ networks for this experiment.}
    \label{fig:control_good_2k_2d}
\end{figure}

In Figure \ref{fig:control_good_2k_2d}, we observe that the trade rates of each agent is more negative for positively correlated assets which implies that they are selling at a faster rate. In contrast, for negatively correlated assets, the trade rate is less negative resulting in slower liquidation of the risky assets. This results in a larger position in the risky asset held over time. Figure \ref{fig:control_good_2k_2d} shows the same trade rate for assets $1$ and $2$ at $N=0$. This implies that the agents do not have a strong preference on holding asset $1$ or $2$ when they first begin to trade.

\vskip2mm
\noindent\textit{Case $\mathbf{2}$: $\mathbf{\mu_1-r < 0, \mu_2-r < 0}$}\vskip2mm

In case $2$, we set $\mu_1=\mu_2=0.0$ and $r=0.05$. Case $2$ assumes that both assets are under poor market conditions. Figure \ref{fig:portfolio_bad_2k_2d} plots the optimal portfolio for agents $1$ and $2$ when the assets are weakly and strongly correlated. The corresponding optimal controls are plotted in Figure \ref{fig:control_bad_2k_2d}.

\begin{figure}[htbp]
    \centering
    \includegraphics[width=0.36\columnwidth]{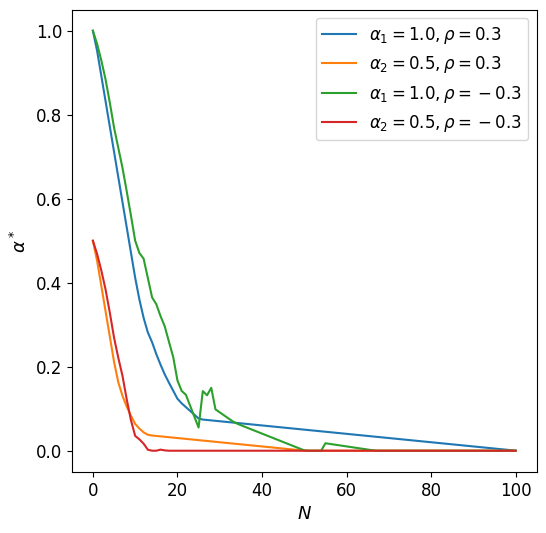}
    \hspace{0.3cm}
    \includegraphics[width=0.36\columnwidth]{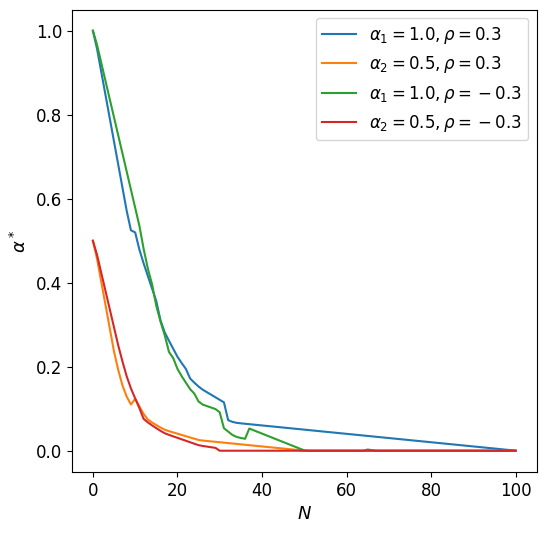}
    \includegraphics[width=0.36\columnwidth]{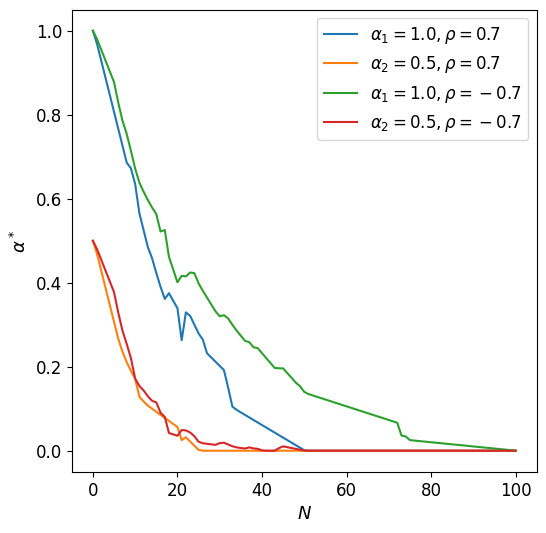}
    \hspace{0.3cm}
    \includegraphics[width=0.36\columnwidth]{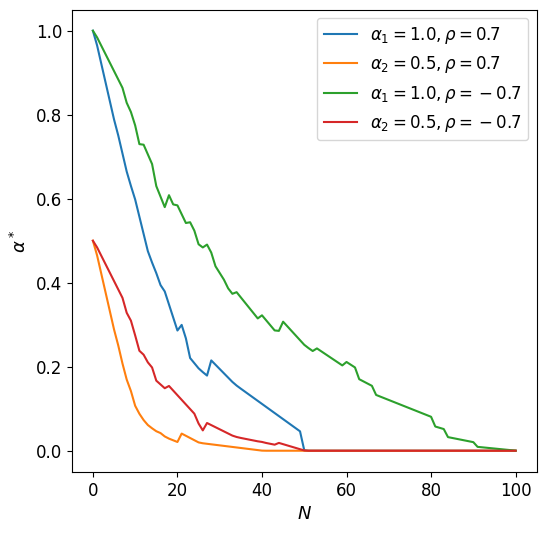}
    \caption{$\alpha^*$ for agent 1 (top left) and agent 2 (top right) with low correlation. $\alpha^*$ for agent 1 (bottom left) and agent 2 (bottom right) with high correlation. We used the average of $100$ networks for this experiment.}
    \label{fig:portfolio_bad_2k_2d}
\end{figure}

In Figure \ref{fig:portfolio_bad_2k_2d}, we observe that for weakly correlated assets, the agents actively liquidated assets regardless of correlation. However, under strong correlation negatively correlated, assets are liquidated more gradually. This is in line with what was observed in case $1$. Since the two assets are negatively correlated, the agents expect asset $1$ to increase when asset $2$ decreases, and vice versa. This factor reduces the negative drift experienced by each asset, resulting in the agents liquidating the assets more gradually.

\begin{figure}[htbp]
    \centering
    \includegraphics[width=0.36\textwidth]{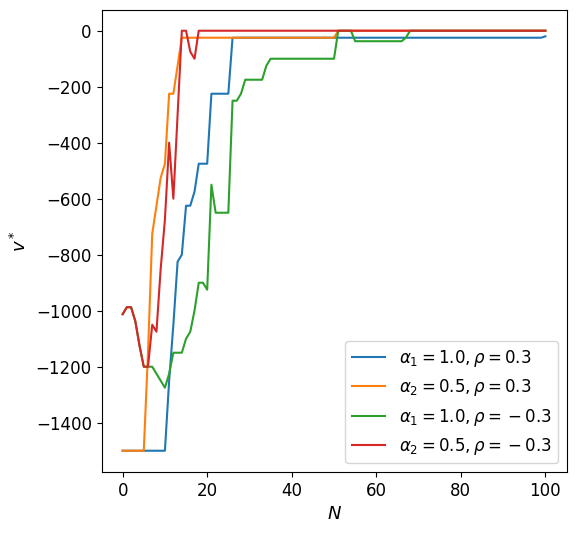}
    \hspace{0.3cm}
    \includegraphics[width=0.36\textwidth]{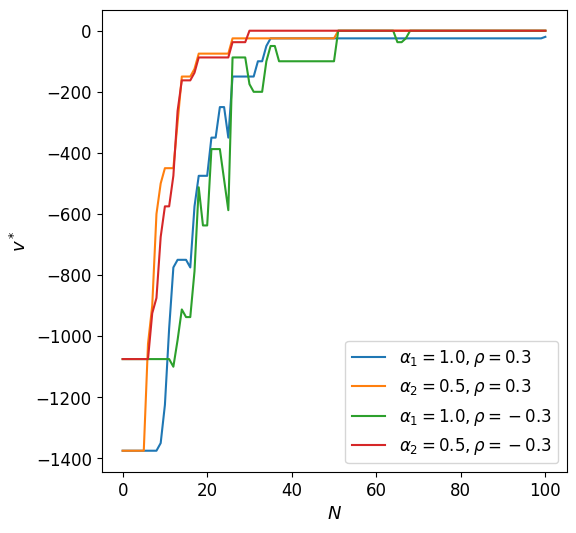}
    \includegraphics[width=0.36\textwidth]{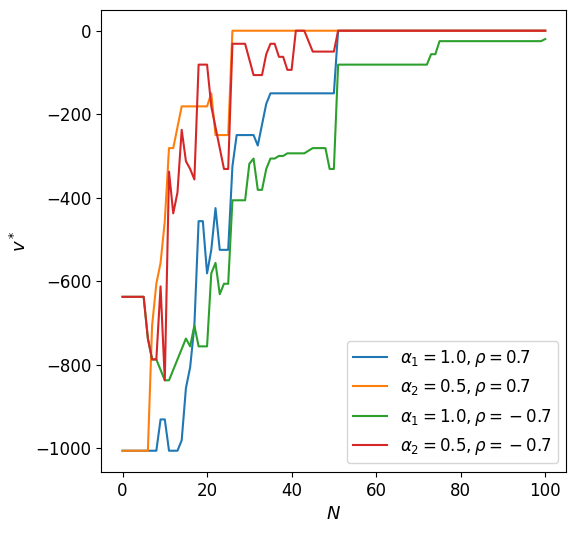}
    \hspace{0.3cm}
    \includegraphics[width=0.36\textwidth]{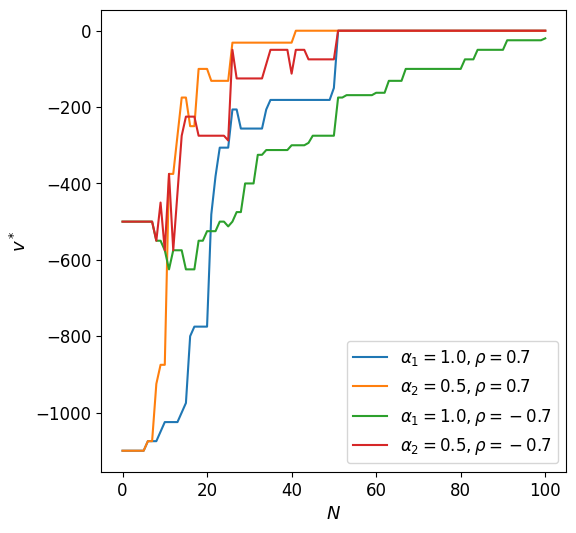}
    \caption{$v^*$ for agent 1 (top left) and agent 2 (top right) with low correlation. $v^*$ for agent 1 (bottom left) and agent 2 (bottom right) with high correlation. We used the average of $100$ networks for this experiment.}
    \label{fig:control_bad_2k_2d}
\end{figure}

In Figure \ref{fig:control_bad_2k_2d}, the agents liquidate positively correlated more quickly at $N=0$. This reflects the initial steeper slope shown for positively correlated assets. Note, the path of the controls is not unique in our problem, as the HJB equation only guarantees uniqueness of the value function.

\vskip2mm
\noindent\textit{Case $\mathbf{3}$: $\mathbf{\mu_1-r > 0, \mu_2-r < 0}$}\vskip2mm

In case $3$, we set $\mu_1=0.1$, $\mu_2=0.0$ and $r=0.05$. Figure \ref{fig:portfolio_good_bad_2k_2d} plots the optimal portfolio of agent $1$ and agent $2$ when $\mu_1-r>0$ and $\mu_2-r<0$. The optimal control of agent $1$ and agent $2$ are shown in Figure \ref{fig:control_good_bad_2k_2d}.

\begin{figure}[htbp]
    \centering
    \includegraphics[width=0.36\textwidth]{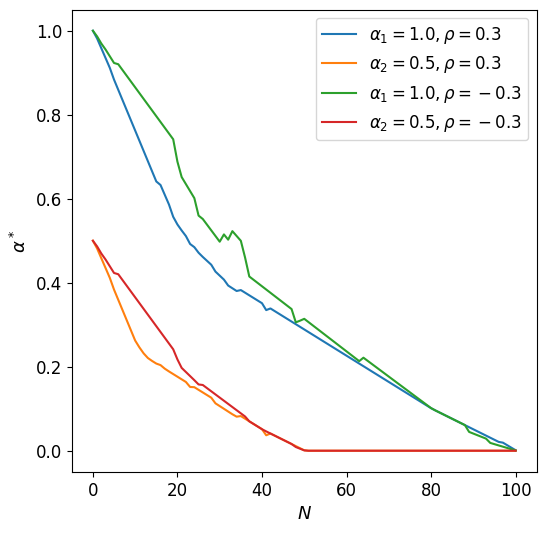}
    \hspace{0.3cm}
    \includegraphics[width=0.36\textwidth]{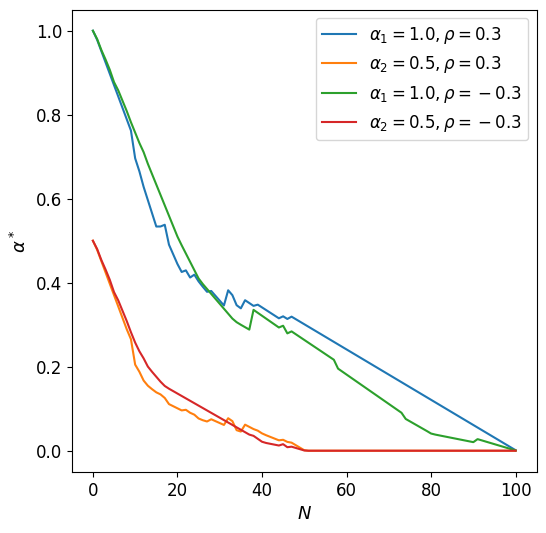}
    \includegraphics[width=0.36\textwidth]{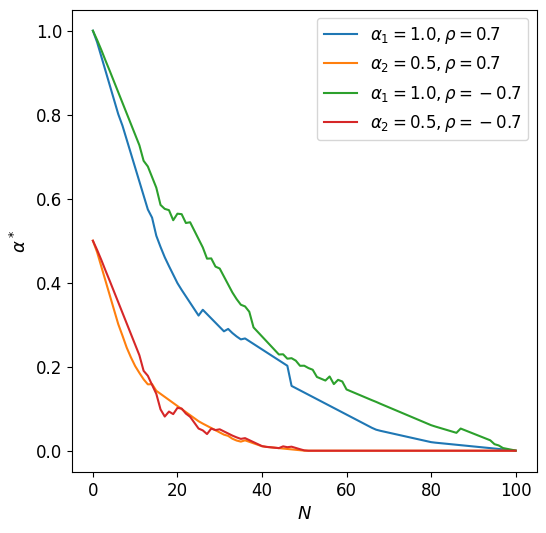}
    \hspace{0.3cm}
    \includegraphics[width=0.36\textwidth]{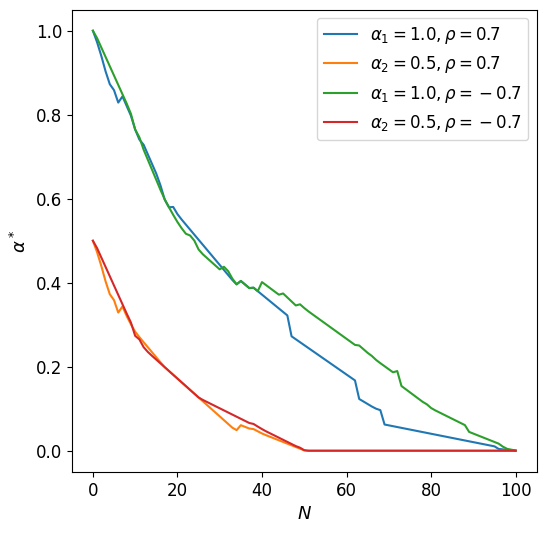}
    \caption{$\alpha^*$ for agent 1 (top left) and agent 2 (top right) with low correlation. $\alpha^*$ for agent 1 (bottom left) and agent 2 (bottom right) with high correlation. We used the average of $100$ networks for this experiment.}
    \label{fig:portfolio_good_bad_2k_2d}
\end{figure}

In Figure \ref{fig:portfolio_good_bad_2k_2d}, we observe that when there are mixed market conditions, the agents try to liquidate the assets in a similar manner regardless of correlation. Both negative and positive correlation encourages the agents to gradually liquidate asset $1$ as asset $1$ is expected to increase its price over time. In Figure \ref{fig:control_good_bad_2k_2d}, we see the agents quickly liquidating asset $2$. We also observe that the agents actually do not liquidate asset $1$ completely until the assets are sold off at $T$. 

\subsection{Experiment 4: Optimal Execution for 2 Agents with Different Price Impact}
In this experiment, we study the interaction of two agents with different permanent price impact factors, $\kappa_{P1}, \kappa_{P2}$. Agent $1$ has $\kappa_{P1} = 5\times 10^{-4}$ and agent $2$ has $\kappa_{P2} = 0.0$. This signifies that agent $1$ is a very influential trading agent who has a lot of impact on price dynamics. Conversely, agent $2$ is a trading agent with no price impact at all. We study the interaction between the agents for $1$ asset and observe how they interact in a good economy,  i.e. $\mu-r > 0$ and in a bad economy, i.e. $\mu-r < 0$. We also consider the different levels of performance awareness, $\phi = 0.0, 0.3, 0.7$. We set $r = 0.05$, $\mu = 0.1$ or $\mu = 0.0$, $s_0 = 1.0$, $\alpha = 0.0$, $\kappa_\tau = 0.0$, and $\kappa_s = 0.0$ for $T = 1/250$, $M =1000$, $N=100$. 

\begin{figure}[htbp]
    \centering
    \includegraphics[width=0.36\textwidth]{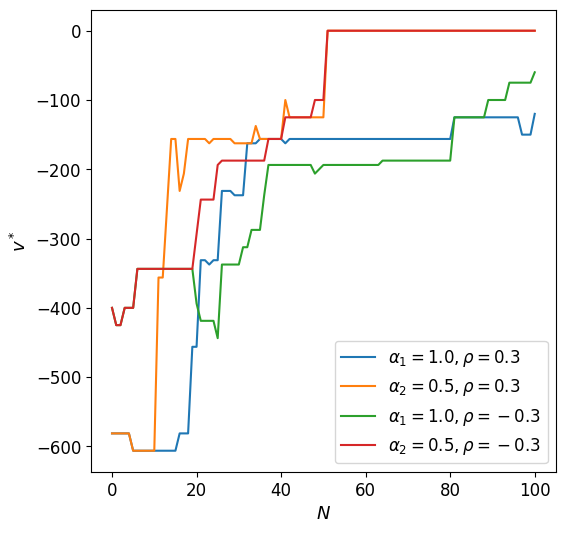}
    \hspace{0.3cm}
    \includegraphics[width=0.36\textwidth]{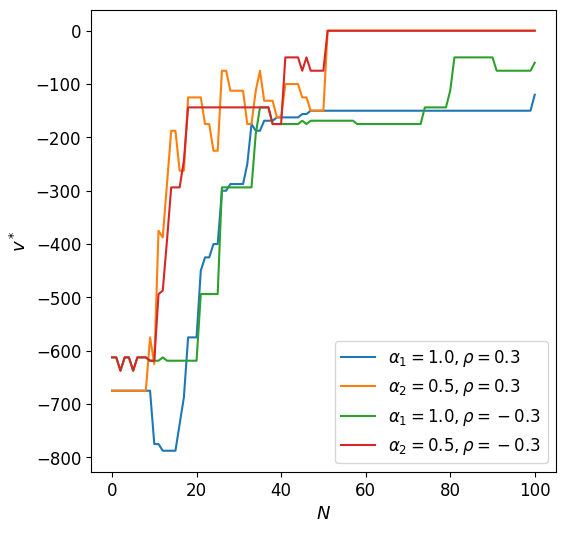}
    \includegraphics[width=0.36\textwidth]{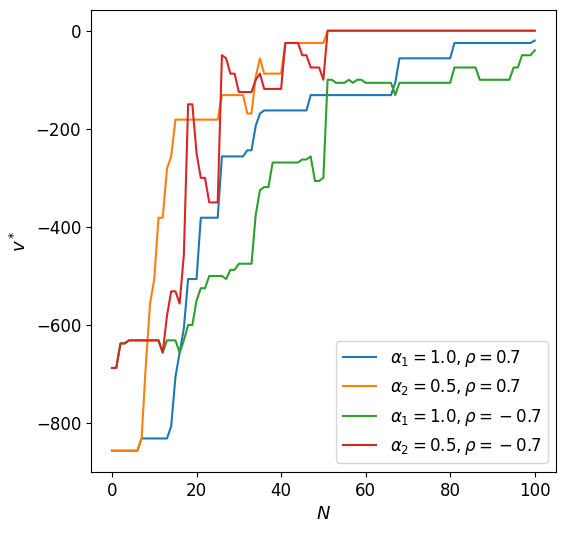}
    \hspace{0.3cm}
    \includegraphics[width=0.36\textwidth]{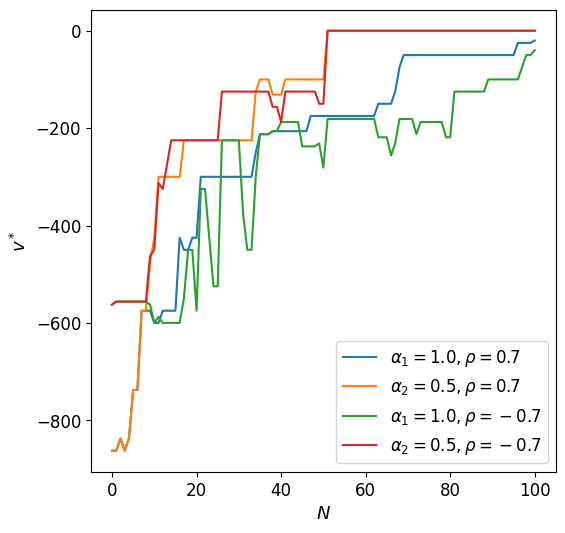}
    \caption{$v^*$ for agent 1 (top left) and agent 2 (top right) with low correlation. $v^*$ for agent 1 (bottom left) and agent 2 (bottom right) with high correlation. We used the average of $100$ networks for this experiment.}
    \label{fig:control_good_bad_2k_2d}
\end{figure}

First, we consider the case when $\phi=0.0$. The optimal portfolio and trade rate for agent $1$ and agent $2$ are shown in Figure \ref{fig:whale_phi_0}. We observe that when $\mu-r > 0$, agent $1$ is very inactive in their trading activity and holds a large proportion of wealth in risky assets. Agent $2$ on the other hand sells off their position consistently until around $N=20$.

The trade rates tell us that agent $2$ perceives the market to be more risky than agent $1$. This can be seen in the trade rates, where agent $2$ sells off more quickly than agent $1$. When $\mu - r < 0$, we observe an interesting behaviour. Agent $2$ immediately sells off their position. There is no penalty for agent $2$ to do this as they have no impact on the asset price. However, agent $1$ does not do the same and actually holds a large position in the risky asset longer before liquidating fully. 

\begin{figure}[htbp]
    \centering
    \includegraphics[width = 0.36\textwidth]{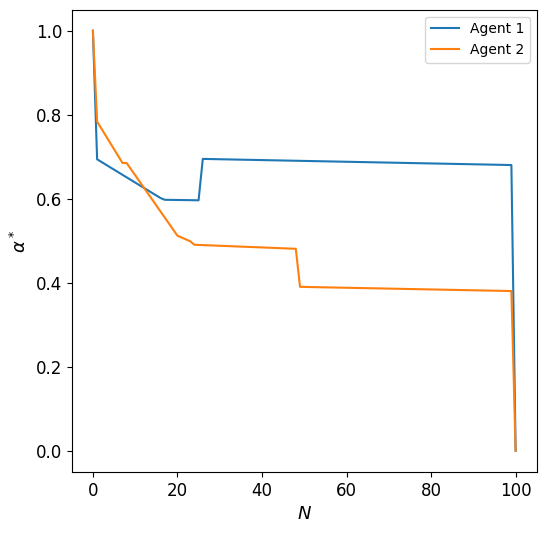}
    \hspace{0.3cm}
    \includegraphics[width = 0.36\textwidth]{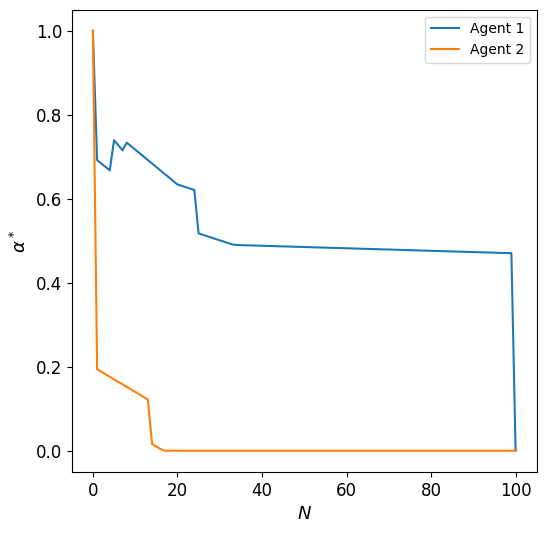}
    \includegraphics[width = 0.36\textwidth]{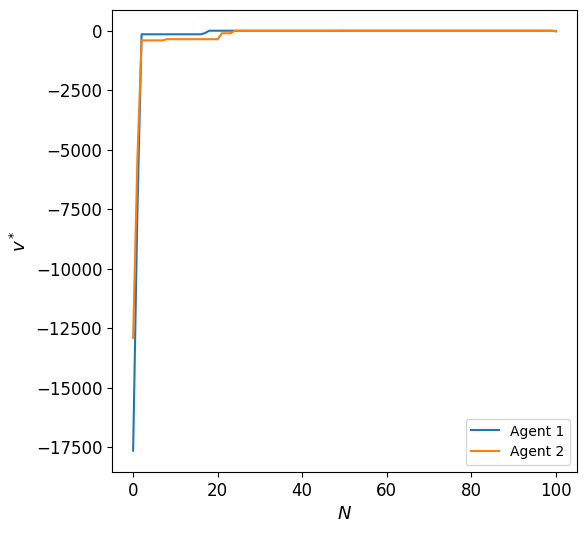}
    \hspace{0.3cm}
    \includegraphics[width = 0.36\textwidth]{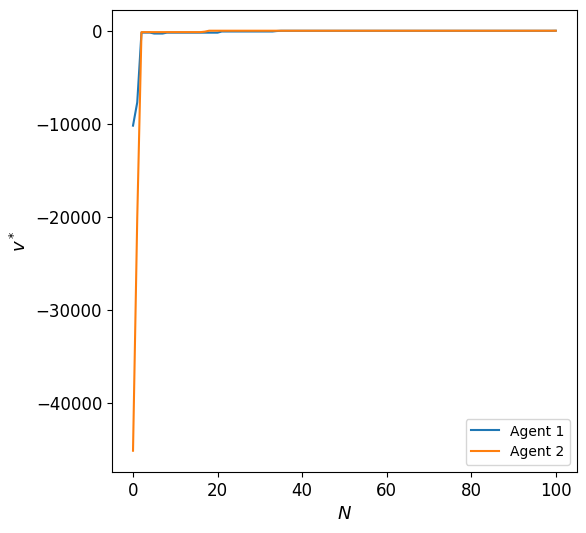}
    \caption{The optimal portfolio (top) and trade rate (bottom) for Agents $1$ (blue) and $2$ (orange). For a good economy (right) and bad economy (left) when $\phi=0.0$.}
    \label{fig:whale_phi_0}
\end{figure}

Next, we consider the case when $\phi=0.3$. The optimal portfolio and trade rate for agent $1$ and agent $2$ are shown Figure \ref{fig:whale_phi_0_3}. In this case, both agents are aware of the general performance of the other agent and they both sell quickly. When $\mu-r > 0$, both agents want to hold more positions in the risky asset until the moment they want to liquidate after $N=100$, this is expected as the payoff from holding the asset until $N=100$ outweighs the penalty from failing to liquidate in the time horizon when $\mu - r > 0$. We see that agent $1$ trades faster than agent $2$ until around $N = 20$, but the trading eases as $N$ goes to $100$. When $\mu - r < 0$, we see a similar behaviour to the good economy case for agent $1$ except that agent $1$ trades faster until around $N = 10$. We see agent $2$ trade more gradually until $N=10$ as it would see the asset price drop due to the actions of agent $1$.

Finally, we consider the case when $\phi=0.7$. In Figure \ref{fig:whale_phi_0_7}, we observe that agent $1$ is willing to hold more position in the risky asset but not as quickly as when $\phi = 0.0$. This is because both agents now weigh the average performance of the market almost as much as their own individual performance. When $\mu-r > 0$, again both agents want to hold more positions in the risky asset until the moment they want to liquidate after $N=100$. We see that agent $2$ trades faster than agent $1$ at all timesteps. When $\mu - r < 0$, agent $1$ sells more actively than agent $2$ which in turn lowers the asset price. In response, agent $2$ reduces their position in the risky asset drastically from the start.

\begin{figure}[htbp]
    \centering
    \includegraphics[width = 0.36\textwidth]{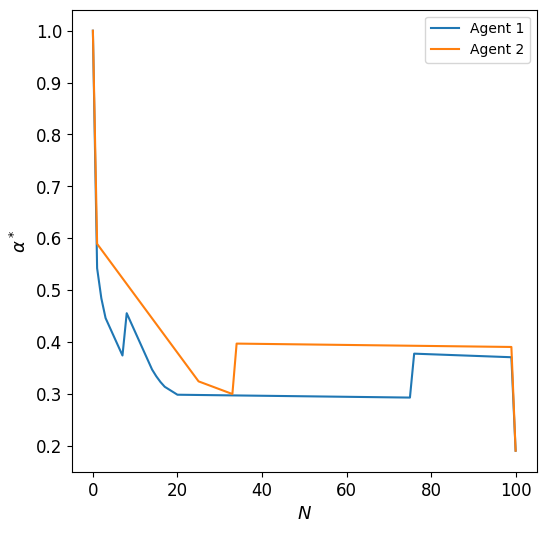}
    \hspace{0.3cm}
    \includegraphics[width = 0.36\textwidth]{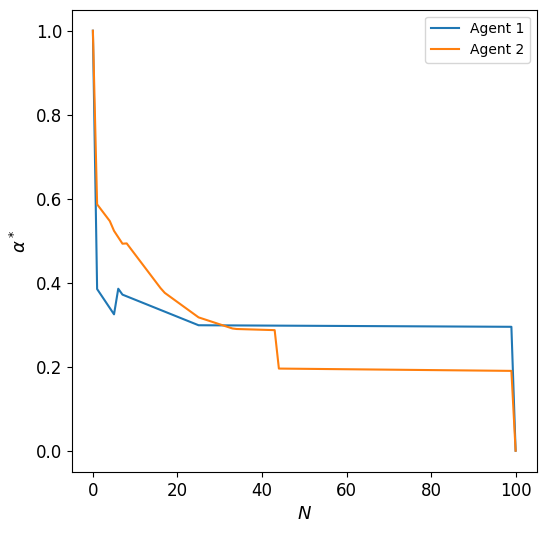}
    \includegraphics[width = 0.36\textwidth]{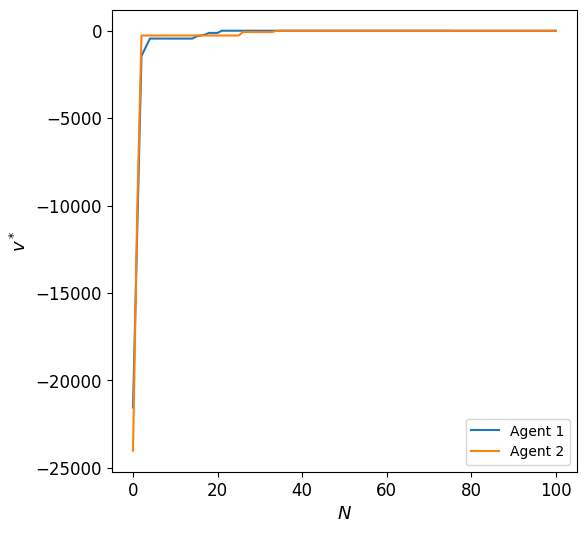}
    \hspace{0.3cm}
    \includegraphics[width = 0.36\textwidth]{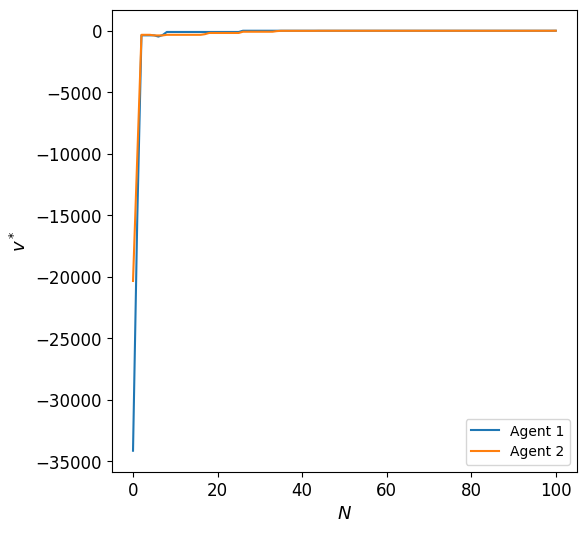}
    \caption{The optimal portfolio (top) and trade rate (bottom) for Agents $1$ (blue) and $2$ (orange). For a good economy (right) and bad economy (left) when $\phi=0.3$.}
    \label{fig:whale_phi_0_3}
\end{figure}

\subsection{Experiment 5: Optimal Execution for Multiple Buyers and Sellers}

In this experiment, we compare the investment performance between sellers (long) and buyers (short) in different economic conditions. To compare different groups of agents, we compute the average Sharpe ratio of the selling agents and the buying agents. The Sharpe ratio provides an insight into the relative performance of the portfolio return. A positive Sharpe ratio is considered an investment strategy that outperforms the risk free rate. A negative Sharpe ratio indicates the liquidating strategy; i.e. selling and buying, have lower return than holding the asset. In our problem, a strategy with Sharpe ratio $> 0$ is acceptable. 

For each agent $k=1,...,K$, the return on liquidating the position is given by $R_k=(X_k(T)-X_k(0))/X_k(0)$ and the standard deviation on the return is $\sigma_{R_k}=\sqrt{Var(R_k)}$. We want to compare the strategy with the strategy of holding the asset from $t=0$ to $T$. The no-trade return is given by $R_{0,k}=(\sum_i\alpha_{0,i}S_i(T)-X_k(0))/X_k(0)$. Thus the Sharpe ratio is given by:
\[
    SR_k = \frac{\mathbb{E}[R_k]-\mathbb{E}[R_{0,k}]}{\sigma_{R_k}}.
\]
Note we compare the portfolio with the expected value of the asset instead of the risk free rate of return. This compares the performance of different strategies versus holding the asset without liquidating. If the number of long agents is given by $K_{long}$ and the number of short agents is given by $K_{short}$, then we can compute
\[
    \sum_{k\in K_{long}} SR_k/K_{long}
    \;\;\;\; \mbox{and} \;\;\; \sum_{k\in K_{short}} SR_k/K_{short}
\]
to measure the average performance of the long agents (buyers) and short agents (sellers) respectively.

\begin{figure}[htbp]
    \centering
    \includegraphics[width = 0.36\textwidth]{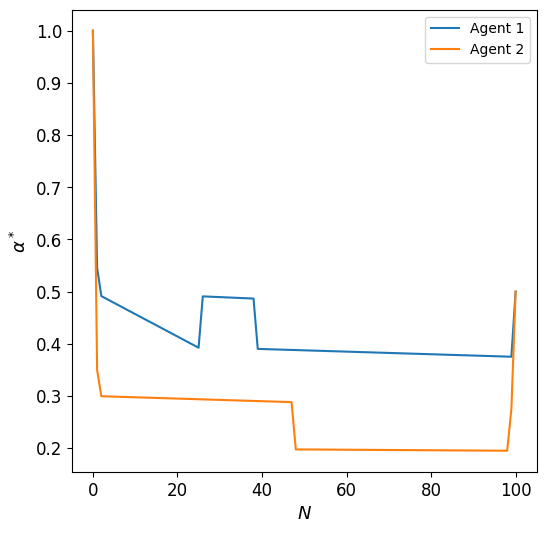}
    \hspace{0.3cm}
    \includegraphics[width = 0.36\textwidth]{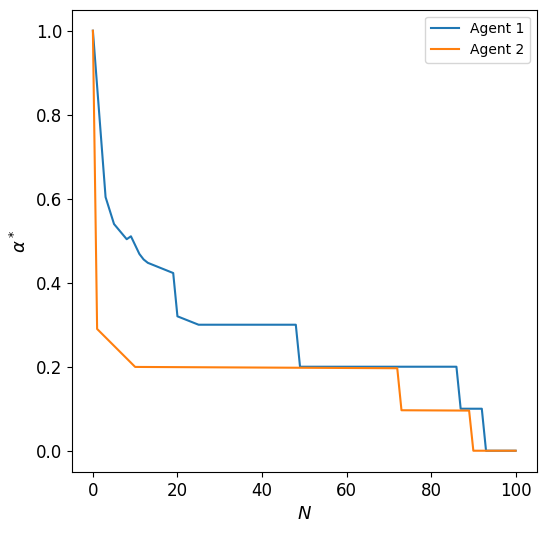}
    \includegraphics[width = 0.37\textwidth]{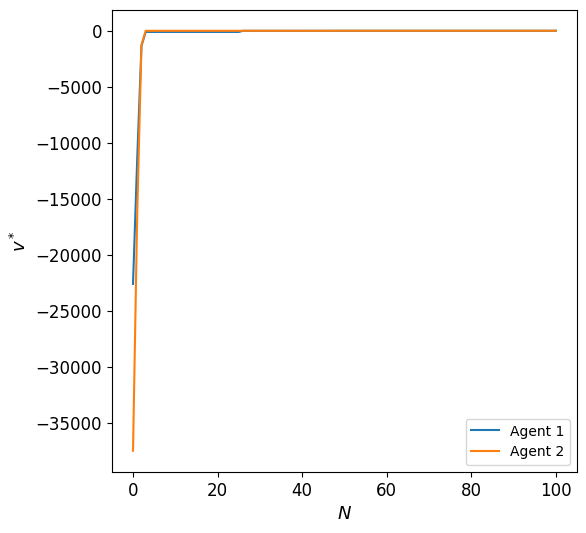}
    \hspace{0.3cm}
    \includegraphics[width = 0.37\textwidth]{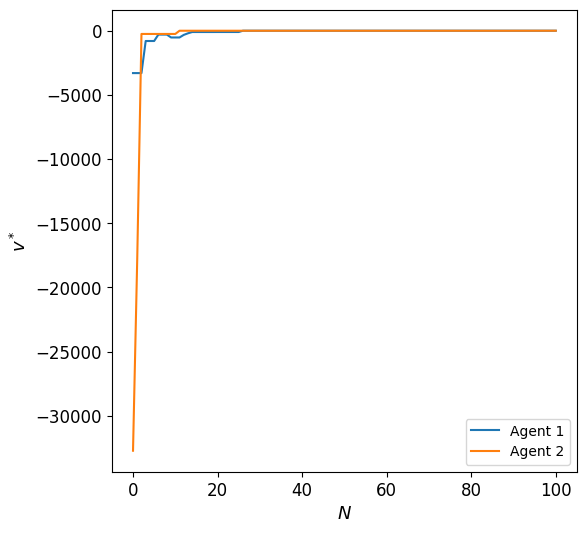}
    \caption{The optimal portfolio (top) and trade rate (bottom) for Agents $1$ (blue) and $2$ (orange). For a good economy (right) and bad economy (left) when $\phi=0.7$.}
    \label{fig:whale_phi_0_7}
\end{figure}

We simulate $K = 10$ agents and $d = 2$ assets with $\rho=0$. We do not consider correlation in this example so that the results are easier to understand. In this experiment, we set $r = 0.05$ for $i=1,...,10$, $\kappa_\tau = 1e-7$, $\kappa_p=0.000125$, $\gamma=2$ and fix $\phi = 0.2$. By increasing the number of agents beyond $K>2$, we can study more interesting dynamics between agents. In case $1$, we let $7$ agents be selling agents and $3$ agents be buying agents. Selling agents are agents liquidating long positions and buying agents are agents liquidating short positions. In case $2$, we look at the $5$ selling and $5$ buying agents. In case $3$, we look at the case where we have $3$ selling agents and $7$ buying agents. For all three cases, we consider different market conditions:
\begin{enumerate}
    \item good market conditions $(\mu_1-r>0,\mu_2-r>0)$,
    \item poor market conditions $(\mu_1-r<0,\mu_2-r<0)$,
    \item mixed market conditions $(\mu_1-r>0,\mu_2-r<0)$.
\end{enumerate}

\vskip2mm
\noindent\textit{Case $1$: $7$ selling agents, $3$ buying agents}

In case $1$, we look at the average Sharpe ratio of $7$ selling agents and $3$ buying agents in different market conditions. The average Sharpe ratio for sellers and buyers are summarized in Table \ref{tab:7_sell_3_buy}. We observe that under market conditions $2)$ and $3)$, it is advantageous for selling and buying agents to liquidate. The best Sharpe ratio we achieve is $0.35014$ for selling agents under $2)$. This is because the agents liquidate the short positions gradually and the long positions quickly. There is an overall negative price impact due to the asymmetry in the number of sellers and buyers which leads to lower asset prices. Under market conditions $1)$, it is disadvantageous to liquidate short positions over holding the asset. This occurs from the overall negative price impact that decreases the amount of growth for both assets. Due to the imbalance of sellers and buyers, we observe the selling agent always performing better than the holding strategy.\\

\begin{table}[htbp]
    \centering
    \begin{tabular}{c|c|c}\hline 
        Market Condition & $\sum_{k\in K_{long}}SR_k/K_{long}$& $\sum_{k\in K_{short}}SR_k/K_{short}$ \\ \hline 
        $\mu_1-r>0,\mu_2-r>0$ & $0.01456$ & $-0.02949$\\ \hline
        $\mu_1-r<0,\mu_2-r<0$ & $0.35014$ & $0.14856$ \\ \hline 
        $\mu_1-r>0,\mu_2-r<0$ & $0.06993$ & $ 0.04475$\\ \hline
    \end{tabular}
    \caption{Average Sharpe ratio of sellers and buyers in good market conditions, poor market conditions and mixed market conditions.}
    \label{tab:7_sell_3_buy}
\end{table}

\vskip2mm
\noindent\textit{Case $2$: $5$ selling agents, $5$ buying agents}\vskip2mm

In case $2$, we look at the average Sharpe ratio of $5$ selling agents and $5$ buying agents in different market conditions. The average Sharpe ratio for sellers and buyers are summarized in Table \ref{tab:5_sell_5_buy}.

\begin{table}[htbp]
    \centering
    \begin{tabular}{c|c|c}\hline 
        Market Condition & $\sum_{k\in K_{long}}SR_k/K_{long}$& $\sum_{k\in K_{short}}SR_k/K_{short}$ \\ \hline 
        $\mu_1-r>0,\mu_2-r>0$ & $0.00909$ & $-0.03696$\\ \hline
        $\mu_1-r<0,\mu_2-r<0$ & $0.15290$ & $0.06123$ \\ \hline 
        $\mu_1-r>0,\mu_2-r<0$ & $0.1814$ & $0.1080$\\ \hline
    \end{tabular}
    \caption{Average Sharpe ratio of sellers and buyers in good market conditions, poor market conditions and mixed market conditions.}
    \label{tab:5_sell_5_buy}
\end{table}

We observe that under market conditions $2)$ and $3)$, it is advantageous for long and short agents to liquidate. The best Sharpe ratio we achieve is $0.1814$ for sellers under $3)$. This is because agents sell (buy) asset $1$ more gradually (quickly) and asset $2$ more quickly (gradually). Because $\alpha_{1,k}>\alpha_{2,k}$, the agents gain more from the increase in asset $1$ price than the decrease in asset $2$ price. This results in an advantage for gradual sellers and aggressive buyers. Under $1)$, sellers have an advantage since they sell off gradually and the price of both assets are expected to increase. Buyers are disadvantaged as the expected value of the asset increases under $1)$, even if they liquidate quickly.

\vskip2mm
\noindent\textit{Case $3$: $3$ selling agents, $7$ buying agents}\vskip2mm

In case $3$, we look at the average Sharpe ratio of $3$ selling agents and $7$ buying agents in market conditions $1), 2)$, and $3)$. The average Sharpe ratio for sellers and buyers are summarized in Table \ref{tab:3_sell_7_buy}. We observe that under market conditions $2)$, it is advantageous for sellers and buyers to liquidate. The best Sharpe ratio we achieve is $0.3113$ for selling agents under $2)$. There is an overall positive price impact since the number of buyers is greater than the number of sellers. We see that sellers and buyers generally do not out perform the hold strategy. This is due to the overall positive price impact adding to the drift term of both assets pushing up the asset price.

\begin{table}[htbp]
    \centering
    \begin{tabular}{c|c|c}\hline 
        Market Condition & $\sum_{k\in K_{long}}SR_k/K_{long}$& $\sum_{k\in K_{short}}SR_k/K_{short}$ \\ \hline 
        $\mu_1-r>0,\mu_2-r>0$ & $-0.06766$ & $-0.14143$\\ \hline
        $\mu_1-r<0,\mu_2-r<0$ & $0.3113$ & $0.1212$ \\ \hline 
        $\mu_1-r>0,\mu_2-r<0$ & $-0.02911$ & $-0.06120$\\ \hline
    \end{tabular}
    \caption{Average Sharpe ratio of sellers and buyers in good market conditions, poor market conditions and mixed market conditions.}
    \label{tab:3_sell_7_buy}
\end{table}

\section{Conclusion}

In this paper, we extended the optimal trade execution problem under the Markowitz criteria to a time-consistent optimal trade execution problem for multiple assets and multiple agents setting. This is done by formulating the original mean-variance problem into its dual representation coupled with an auxiliary HJB equation \citep{aivaliotis-2018}. To extend this into higher dimensions, we reformulate the auxiliary HJB equation into a BSDE that can be solved using a neural network. We present a deep residual self-attention network to solve the BSDE presented in \eqref{eq:BSDE_discrete_U}. This allows us to use a deep residual network that learns temporal dependencies. This overcomes the issue of saving weights for every timestep $N$ and overcomes the vanishing gradient problem of recurrent networks \citep{na-2023}. 

We show in Experiment $1$ that the proposed method matches classical solutions where applicable. In Experiment $2$ we show sensitivities of the proposed method with respect to $\phi, \gamma, \kappa_t, \kappa_p$. When an agent is more conscious other agents in the market, i.e. $\phi$ increases, agents tend to invest more in risky assets. When an agent is more risk adverse, i.e. $\gamma$ increases, the agent invests less in risky assets. We saw that as the temporary price impact of the agent increases they tend to liquidate more gradually and as the permanent price impact increases the agents liquidate faster. 

In Experiment $3$, we looked at the optimal execution of $2$ agents under different market conditions. With $d=1$ we saw that the agents liquidated assets gradually when the market conditions were good and liquidated quickly when they were poor. We also looked at the affects of strong and weak correlation with $d=2$. In Experiment $4$, we study the interaction between $2$ agents, where one agent contributed all the price impact. We found that in general as $\phi$ increase we see decreased trading activity from agent $2$, we also see agent $1$ hold less risky assets. In good economic conditions agent $1$ holds larger positions in the risky asset and is less active in trading than in poor economic conditions. 

In Experiment $5$, we studied the average performance of a group of sellers and buyers in different economic conditions in a market consisting of $K=10$ agents. We found that in most economic conditions active trading on both sides of the market benefited sellers and buyers. In any scenario, when the economy was good, buyers underperformed the hold position. In the case of $3$ sellers and $7$ buyers both sellers and buyers underperformed the holding strategy, except when both assets are expected to reduce in price. 

In this paper, we attempted to cover as much as we can. However, there are many different aspects left to explore. The model we used can be updated to include models for limit order books and trade volumes \citep{alfonsi-2010,cartea-2016}.

\section*{Disclosure Statement}
No potential conflict of interest is reported by the author(s).

\section*{Funding}
This work was supported by the Natural Sciences and Engineering Research Council of Canada.

\section*{ORCID}
\begin{description}
\item \textit{Andrew Na}: https://orcid.org/
0000-0002-6162-8171
\item \textit{Justin W.L. Wan}: https://orcid.org/
0000-0001-8367-6337
\end{description}

\bibliographystyle{plainnat}
\bibliography{References}

\end{document}